\newtheorem{theorem}{Theorem}[section]
\newtheorem{lemma}[theorem]{Lemma}
\newtheorem{definition}[theorem]{Definition}
\newcommand{\ignore}[1]{}
\let\myPushQED=\pushQED
\let\myPopQED=\popQED
\newcommand{\myignore}[1]{}
\newenvironment{proof*}
  {\let\pushQED=\myignore\begin{proof}\let\pushQED=\myPushQED}
  {\def\popQED{}\end{proof}\let\popQED=\myPopQED}
\newenvironment{description*}%
  {\vspace{-1ex}\begin{description}%
    \setlength{\itemsep}{-0.5ex}%
    \setlength{\parsep}{0pt}}%
  {\end{description}}
\newenvironment{itemize*}%
  {\vspace{-1ex}\begin{itemize}%
    \setlength{\itemsep}{-0.5ex}%
    \setlength{\parsep}{0pt}}%
  {\end{itemize}}
\newenvironment{enumerate*}%
  {\vspace{-1ex}\begin{enumerate}%
    \setlength{\itemsep}{-0.5ex}%
    \setlength{\parsep}{0pt}}%
  {\end{enumerate}}
\let\phi=\varphi
\newcommand{\keywords}[1]{\vfill\par\noindent{\textbf{\textit{Index Terms---}}#1}}
\title{\textbf{Coping with Physical Attacks on Random Network Structures}}
\author{
        Omer Gold\thanks{Department of Computer Science, Bar-Ilan University, Israel. email: omergolden@gmail.com.} 
            \and
        Reuven Cohen\thanks{Department of Mathematics, Bar-Ilan University, Israel. email: reuven@math.biu.ac.il.}
}
\date{}
\begin{document}

\pagestyle{plain}
    \bstctlcite{IEEEexample:BSTcontrol}


\begin{titlepage}
\maketitle

\begin{abstract}
Communication networks are vulnerable to natural disasters, such as
earthquakes or floods, as well as to physical attacks, such as an
Electromagnetic Pulse (EMP) attack. Such real-world events happen
at \emph{specific geographical locations} and disrupt specific
parts of the network. Therefore, the geographical layout of the network
determines the impact of such events on the network's physical topology
in terms of capacity, connectivity, and flow. 

Recent works focused on assessing
the vulnerability of a \emph{deterministic} (geographical) network to such
events. In this work, we focus on assessing the vulnerability of
(geographical)  \emph{random networks} to such disasters. 
We consider stochastic graph
models in which nodes and links are probabilistically distributed 
on a plane, and model the disaster event as a circular
cut that destroys any node or link within or intersecting the circle.

We develop algorithms for assessing the damage
of both targeted and non-targeted (random) attacks and determining
which attack locations have the expected most disruptive impact on the network. 
Then, we provide experimental results for assessing the impact
of circular disasters to communications networks in the USA, 
where the network's geographical layout
was modeled probabilistically, relying on demographic information only.
Our results demonstrates
the applicability of our algorithms to real-world scenarios.

Our novel approach allows
to examine how valuable is public information about the network's geographical area
(e.g., demography, topography, economy)
to an attacker's destruction assessment capabilities in the case the network's physical topology is hidden,
and examine the affect of hiding the actual physical location of the fibers on the attack strategy.
Thereby, our schemes can be used as a tool for policy makers and engineers to
design more robust networks by placing links along paths that avoid 
areas of high damage cuts, or identifying locations which require
additional protection efforts (e.g., equipment shielding).

Overall,
the work demonstrates that using stochastic modeling and geometry
can significantly contribute to our understanding of network survivability and resilience.

\end{abstract}

\keywords{Network survivability, physical attacks, geographic networks, random networks, optical networks,
large scale failures, Electromagnetic Pulse (EMP).} 

\end{titlepage}

\newpage

\setcounter{page}{2}

\sloppy

\section{Introduction}

\subsection{Background}
\label{sec:background} 

In the last decades, telecommunication networks have been increasingly crucial for 
information distribution, control of infrastructure and technological services,
as well as for economies in general. Large scale malfunctions and failures 
in these networks,  due to natural disasters, operator errors or malicious attacks
pose a considerable threat to the well being and health of individuals all over the
industrialized world. It is therefore of considerable importance to investigate the
robustness and vulnerabilities of such networks, and to find methods for
improving their resilience and stability.

The global communications infrastructure relies heavily
on physical infrastructure (such as optical fibers,
amplifiers, routers, and switches), making them vulnerable to
\emph{physical attacks, such as Electromagnetic Pulse (EMP) attacks,
as well as natural disasters, such as solar flares, earthquakes,
hurricanes, and floods} \cite{ref:webCableCut2}, \cite{OneSecondAfter}, \cite{ref:EMP-report}, \cite{Wilson04}, \cite{W-top}.
During a crisis, telecommunication is essential to facilitate the control of
physically remote agents, provide connections between emergency response personnel,
and eventually enable reconstitution of societal functions.
Such real-world disasters happen in specific geographic locations,
therefore the geographical layout of the network has a crucial factor on their impact.

Although there has been significant research on network
survivability, most previous works consider a small number of 
isolated failures or focus on shared risk groups (e.g., \cite{Bhandari99}, \cite{NMB04}, \cite{OM05}
and references therein). On the other
hand, work on large-scale attacks focused mostly on cyber-attacks
(viruses and worms) (e.g., \cite{ref:BA99}, \cite{GCABH05}, \cite{MD03}) and thereby, focus 
on the logical Internet topology.
In contrast, we consider events causing a large number of failures in a
specific geographical region, resulting in failures of network components (represented by nodes and links) 
which are geographically located within or intersecting the affected region.

This emerging field of \emph{geographically correlated failures}, 
has started gaining attention only recently, e.g., \cite{Neumayer08,NZCM.INFOCOM09,DBLP:journals/ton/NeumayerZCM11,DBLP:conf/globecom/NeumayerM11,DBLP:conf/infocom/AgarwalEGHSZ11,ieeetranAEGHSZ13, 
DBLP:conf/infocom/NeumayerM10, DBLP:conf/infocom/NeumayerEM12},  
in these works, algorithms were proposed for assessing the impact of such geographically correlated failures to a given \emph{deterministic network},
and finding a location (or set of locations) where a disaster will cause maximum disruptive damage to the network, 
measured by either capacity, connectivity or flow terms.
Various failure models were studied in these works, mainly in the form of a circular region failures or line-segment failures, such that any station or fiber within or intersecting the affected region is destroyed.
Such failures were studied under the assumption of deterministic failures,
as well as random failures (i.e. a circular region failure located randomly over the map),
and probabilistic failures (e.g. gives the ability to study cases when a component's fails in proportion to its distance from the attacked point). 
While various failure models were studied in these works, in all of them, the network's layout is assumed to be deterministic,
i.e. the geographical locations of nodes and links are known.
To the best of our knowledge, spatial non-deterministic networks (e.g. spatial random networks)
survivability were not studied under the assumptions of geographically correlated failures, and this work is the first to do so.

In this work we focus on the problem of geographically correlated failures in context of \emph{spatial random networks},
in particular, finding locations where a disaster or an attack on, will cause the most disruptive impact on the network.
Before this work, that was one of the most significant unstudied problems in the field,
as similar problems for deterministic networks was recently studied extensively.

We consider a stochastic model in which nodes
and links are probabilistically distributed geographically on a plane. 
The motivation behind it is to examine the reliability of a network where we possess only partial (probabilistic) 
information about its geographical layout. For example, a geographically hidden network where the adversary 
possesses only partial information about the network topology or no knowledge at all. 
We show that valuable probabilistic knowledge about the network's geographical layout can be modeled from publicly available data, such as
demographic maps, topographic maps, economy maps, etc.
A simple example uses the fact that in densely populated areas the probability for stations (nodes) to exist is high compared to desolated 
areas, in which it is less likely to find many stations. 
Similarly, the probability for existence of a fiber (link) between two stations can be modeled as a function of the 
distance between the stations, the population density in the station's regions, and possibly other parameters relating to the 
endpoints and geography.

\begin{figure}[t]
\centering
	  \includegraphics[width=0.9\textwidth]{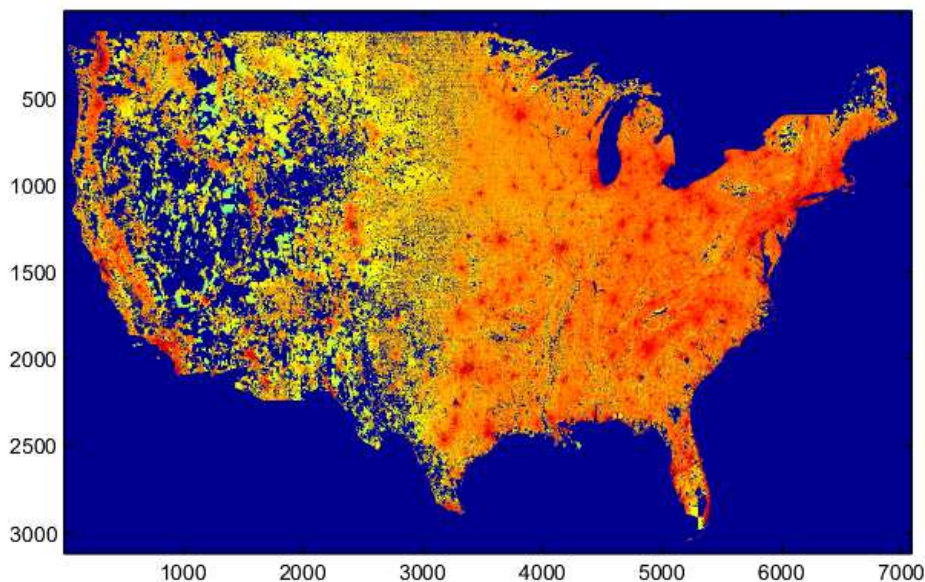}
 \caption{Color map of the the USA population density in logarithmic scale. Data is taken from \cite{columbia}.}
       \label{fig:pop_density1}
\end{figure}

We study the impact of circular geographically correlated failures centered in a specific geographical location (referred as ``circular cut").
The impact of such an event is on all the networks' component, i.e. stations, fibers (represented as nodes and links)
intersecting the circle (including its interior). Each network component located within or intersecting the circular cut
is considered to be destroyed.
The damage is measured by the total expected capacity of the intersected links,
or by the total number of failed network components.

It is relevant to note that in the case of a large-scale disaster or physical attack, where many links fail simultaneously,
current technology will not be able to handle very large-scale re-provisioning
(see for example, the CORONET project \cite{Clapp:10}). 
Therefore, we assume that lightpaths are static, implying that if a lightpath
is destroyed, all the data that it carries is lost.
However, the main data loss damage is caused by the inability to use those failed fibers, 
remaining the post-attack network (the network without the failed fibers) limited only to the components outside the attack's region.
Thus, a scenario that the location and the size of the attack causes large-scale failures is fatal for the ability to transmit data
not only from locations within the attack's region, but also from locations outside the region that are connected through fibers 
that go through the region.

The ability to probabilistically model a network using information such as demographic maps (see illustration in Fig.~\ref{fig:pop_density1}), 
terrain maps, economy maps, etc. can be used as an input to our algorithms for estimating the damage of attacks in different locations,
and determining a location where an attack will cause maximum expected disruptive damage in terms of capacity or connectivity.
This is important in assessing the expected damage from an attack by an adversary with limited knowledge.
In order to design a more robust and well defended system one can consider the resilience of
the actual network topology compared to the appropriate random model, and also consider the 
effect of hiding the actual physical location of fibers on the attack strategy and expected 
damage by an adversary.

In section~\ref{sec:numerical_results} we provide experimental results that demonstrate
the applicability of our algorithms to estimate the expected impact of circular cuts in different locations
on communication networks in the USA,
and to find locations of (approximately) worst-case cuts for this model,
where the network's layout was modeled relying on demographic information only (see Fig.~\ref{fig:pop_density1}).


Overall in this work, we study the vulnerability of various \emph{spatial random network} structures to geographically correlated failures.
That is networks which their components' locations (nodes and links) are distributed probabilistically on the plane.
Using stochastic modeling, geometric probability and numerical analysis techniques, we demonstrate a novel approach
to develop algorithms for finding locations in which circular disasters (of particular radius)
cause the expected most significant destruction, allowing to identify locations which require
additional protection efforts (e.g., equipment shielding).
We also provide an algorithm to assess the impact of a `random" circular disaster
to the random network. 
To the best of our knowledge, our work is the first to 
study such geographically correlated failures in the context of \emph{spatial random networks}.
Before this work, that was one of the most significant unstudied problems in the field,
as similar problems for deterministic networks was recently studied extensively.


\subsection{Related Work}
\label{sec:related}

The issue of network survivability and resilience has been extensively
studied in the past (e.g., \cite{MST04,Bhandari99,GR00,ZS00,MN02,NMB04,ref:CEAH00,ref:BA99} and
references therein). However, Most of these works concentrated on the logical network
topology and did not consider the physical location of nodes and links.
When the logical (i.e., IP) topology is considered,
wide-spread failures have been extensively studied \cite{GCABH05}, \cite{MD03}. 
Most of these works consider the topology of the Internet as
a random graph \cite{ref:BA99} and use percolation theory to study the 
effects of random link and node failures on these graphs. These
studies are motivated by failures of routers due to attacks by
viruses and worms rather than physical attacks.

Works that consider physical topology and fiber networks
(e.g., \cite{Crochat:2000}, \cite{NMB04}), usually focused on a small number of fiber failures 
(e.g., simultaneous failures of links sharing a common physical
resource, such as a cable, conduit, etc.). Such correlated link
failures are often addressed systematically by the concept of
shared risk link group (SRLG) \cite{IDSRLG} (see also section~\ref{sec:background}). 
Additional works explore
dependent failures, but do not specifically make use of the
causes of dependence \cite{LJ09}, \cite{S77}, \cite{TDG09}.

In contrast with these works, we focus on failures within
a specific geographical region, (e.g., failures caused by an EMP attack \cite{ref:EMP-report},\cite{Wilson04}) 
implying that the failed components do not necessarily share the same physical resource.

A closely related theoretical problem is the \emph{network inhibition problem} \cite{Phillips93}, \cite{PFL07}. 
Under that problem, each edge in the network has a destruction
cost, and a fixed budget is given to attack the network.
A feasible attack removes a subset of the edges, whose total
destruction cost is no greater than the budget.
The objective is to find an attack that minimizes the
value of a maximum flow in the graph after the attack. 
However, previous works dealing with this setting
and its variants (e.g., \cite{CSM04}, \cite{PFL07}) 
did not study the removal of (geographically) neighboring links.
Until the recent papers \cite{Neumayer08,NZCM.INFOCOM09,DBLP:journals/ton/NeumayerZCM11} by \emph{Neumayer et al.},
perhaps the closest to this concept was the problem formulated in \cite{Bienstock91}.

Similarly to deterministic networks, the subject of \emph{random networks} and \emph{survivability}
was well studied in the past (e.g. \cite{ref:BA99}, \cite{ref:CEAH00}, \cite{MD03}, \cite{CEBH01}), 
most of these works model the network as a random graph without considering the physical location of nodes and links,
and focus on the robustness of the graph's structure, ignoring a physical embedment in the \emph{plane}.
Some works considered geometry into the random model by considering distances to the graph
(e.g. \cite{DBLP:conf/podc/FabrikantLMPS03}). 
However, previous works dealing with such models did not study the removal of (geographically) neighboring links.

Recently, the subject of \emph{geographically correlated failures} was proposed
(e.g., \cite{Neumayer08,NZCM.INFOCOM09,DBLP:journals/ton/NeumayerZCM11}), 
where failures happens within a \emph{specific} geographical region and span
an extensive geographic area, disrupting all physical network equipment within the affected region. 
Novel works have been made recently to study the impact of various types of
geographically correlated failures to a given \emph{deterministic network}
\cite{DBLP:journals/ton/NeumayerZCM11}, \cite{DBLP:conf/infocom/NeumayerM10},
\cite{DBLP:conf/globecom/NeumayerM11}, \cite{MILCOM10}, \cite{MILCOM10}, \cite{ieeetranAEGHSZ13}, \cite{Agarwal:2013}. 
In these works various failure models were studied, mainly in the form of a circular region failures or line-segment failures,
such that any station or fiber within or intersecting the affected region is destroyed.
Such failures were studied under the assumptions of deterministic failures,
as well as random failures (i.e. a circular region failure located randomly over the map),
and probabilistic failures (e.g. gives the ability to study cases when a component's failure probability is proportional
to its distance from the attacked point). 
While various failure models were studied in these works, in all of them, the network's layout is assumed to be deterministic,
i.e. the geographical locations of nodes and links are known.
To the best of our knowledge, our work is the first to 
study such geographically correlated failures in the context of \emph{spatial random networks}.

We give a brief survey on works which are the most closest and relevant to our work.
Similarly to the failure model we study in our work,
most of these works consider a geographically correlated failure by a circular region of a specific radius (referred as ``circular cut"),
or by a line-segment of specific length (referred as ``line-segment cut"). Such a cut destroys any station or fiber within or intersecting it.
However, we note again, in all these works, the subject network is deterministic and given.

\begin{figure}[t]
\centering
        \includegraphics[trim = 4cm 6.5cm 2cm 5.5cm, clip, width=0.8\textwidth]{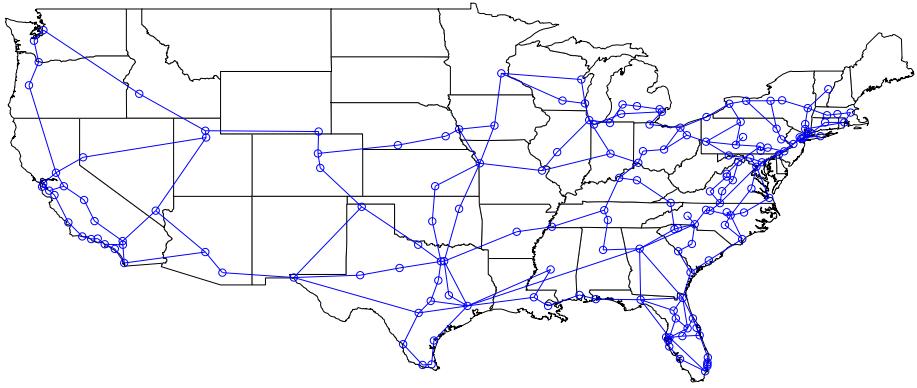}
\vspace*{-.5cm} \caption{The fiber backbone operated by a major U.S.
network provider \cite{level3} studied in \cite{Neumayer08,NZCM.INFOCOM09,DBLP:journals/ton/NeumayerZCM11}.}
       \label{fig:fiber}
\end{figure}

\paragraph{Deterministic Failures} In \cite{Neumayer08,NZCM.INFOCOM09,DBLP:journals/ton/NeumayerZCM11} \emph{Neumayer et al.} formulated the problem of finding a location where a \emph{physical} disaster or an attack on, will cause the most disruptive impact on the network in terms of capacity, connectivity and flow criterions (termed by them as the \emph{geographical network inhibition problem}).
They designed algorithms for solving the problem, and demonstrated simulation results on a U.S
infrastructure (see Fig.~\ref{fig:fiber}). To the best of our knowledge, \cite{Neumayer08,NZCM.INFOCOM09,DBLP:journals/ton/NeumayerZCM11}
were the first to study this problem.
In these works they studied the affect of circular cuts and line segments cuts under different \emph{performance measures},
in particular, provided polynomial time algorithms to find a \emph{worst-case cut}, that is a cut which maximizes/minimizes the value of the performance measure.
The performance measures of a cut, as studied in these works are:
\begin{itemize}
\item \emph{TEC} - The total expected
capacity of the intersected links with the cut.
\item \emph{ATTR} - The fraction of pairs of nodes that remain connected (also known as the average two-terminal reliability of the network)
\item \emph{MFST} - The maximum flow between a given pair of nodes $s$ and $t$.
\item \emph{AMF} - The average value of maximum flow between all pairs of nodes.
\end{itemize}
For performance measure \emph{TEC}, the worst-case
cut obtains a maximum value, while for the rest, it obtains a
minimum value.

We give illustrations from some of their simulation results on a U.S infrastructure
which are relevant to our simulation results further in this work (in section~\ref{sec:numerical_results}).
Fig.~\ref{fig: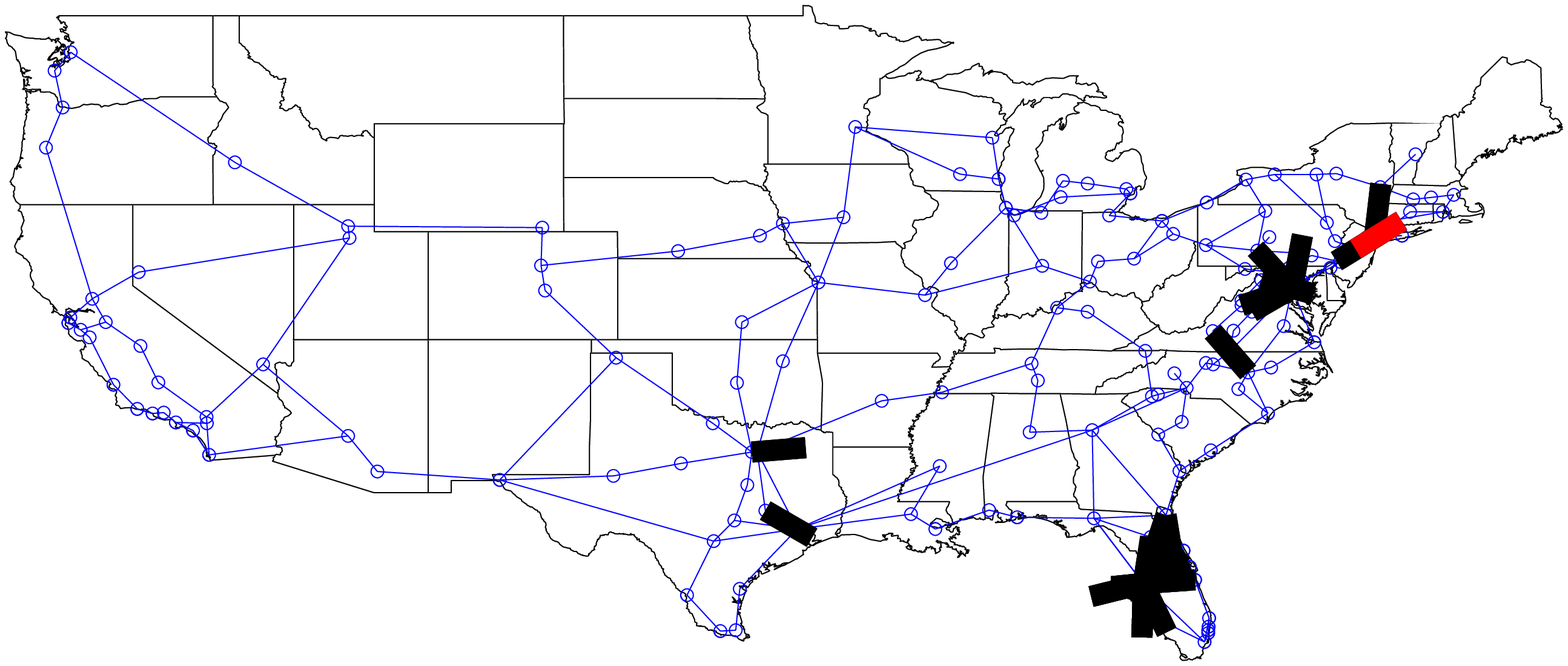} shows that TEC value is large in areas of high link density, such as areas in
Florida, New York, and around Dallas.
Fig.~\ref{fig:r=2.metric=maxFlowLANYC} shows that cuts that minimize the MFST performance measure between
Los Angeles and NYC were found close
to both to Los Angeles and NYC, and the southwest area also
appeared to be vulnerable. These results are relevant to the simulation results for our model as shown in section~\ref{sec:numerical_results}

\begin{figure}[t]
\centering
        \includegraphics[trim = 4cm 6.5cm 2cm 5.5cm, clip, width=0.8\textwidth]{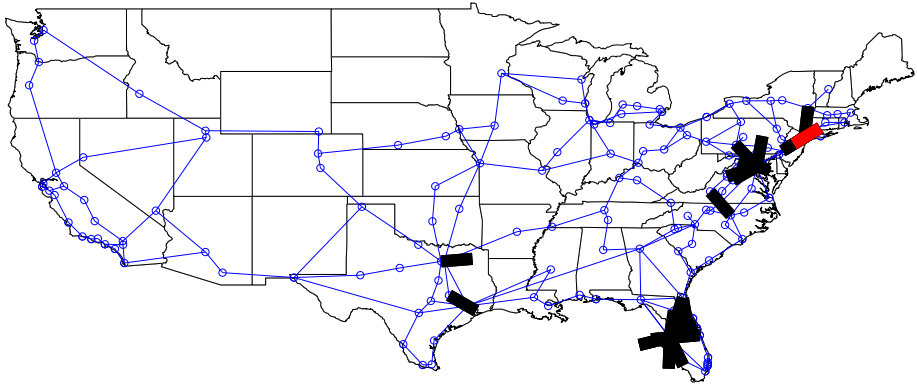}
\caption{Line segments cuts of length approximately 120 miles optimizing $TEC$ - the
red cuts maximize $TEC$ and the black lines are nearly worst-case cuts. Taken from \cite{DBLP:journals/ton/NeumayerZCM11}.}
       \label{fig:h=2.metric=CapacityCut.eps}
\end{figure}

\begin{figure}[t]
\centering
        \includegraphics[trim = 4cm 6.5cm 2cm 5.5cm, clip, width=0.8\textwidth]{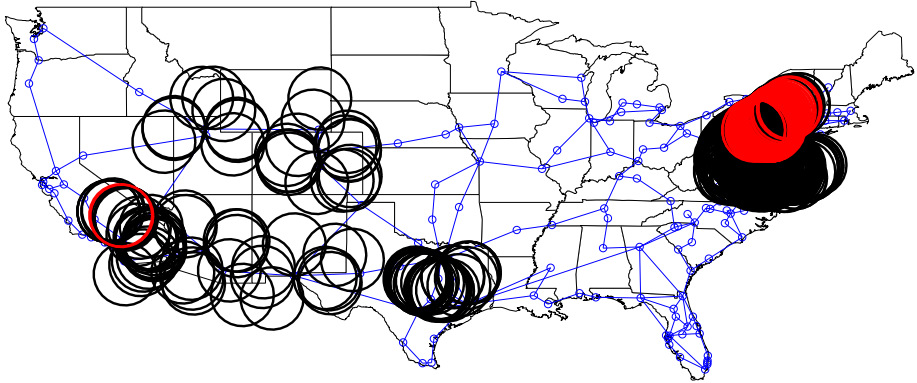}
\vspace*{-0.5cm}
\caption{The impact of circular cuts of radius approximately 120 miles on the $MFST$ between Los Angeles and NYC. Red circles represent cuts that result in $MFST=0$ and black circles result in $MFST=1$. Cuts which intersect the nodes representing Los Angeles or NYC are not shown. Taken from \cite{DBLP:journals/ton/NeumayerZCM11}.}
       \label{fig:r=2.metric=maxFlowLANYC}
\end{figure}

In \cite{MILCOM10} \emph{Agarwal et al.} presented improved runtime algorithms to the problems presented in \cite{Neumayer08,NZCM.INFOCOM09,DBLP:journals/ton/NeumayerZCM11} by the extensive use of tools from the field of \emph{computational geometry}.
In addition, they study some extensions of these problems, such as allowing \emph{multiple} disasters to happen simultaneously
and provided an approximation algorithm for the problem of finding $k$ points in which circular disasters (of particular radius) cause the most significant destruction.

\paragraph{Random Failures}
In \cite{DBLP:conf/infocom/NeumayerM10}, \cite{DBLP:conf/globecom/NeumayerM11} by \emph{Neumayer and Modiano},
they develop tools to assess the impact of a `random' geographic disaster to a given network.
They studied the impact of both random circular cut
(i.e. a circular region failure of particular radius located randomly on the plane)
and random line-segment cut.

The random location
of the disaster can model failure resulting from a natural
disaster such as a hurricane or collateral non-targeted damage
in an EMP attack.  

Intuitively, the probability of a link to fail under a random cut is proportional to the length of the link.
They also compared \emph{independent failures} versus \emph{failures correlated to a random cut} under the $ATTR$ performance measure,
assuming independent link failures such that links fail with the same
probability as in the random cut case.
Thus the probability a link fails is still a function of its length, however links
fail independently. Their results shows that correlated failures (e.g., from a random circular cut) are fundamentally different
from independent failures.

In addition they showed simple insights about network design problems in
the context of random cuts.
In the proposed problems
the location of every node is fixed and the goal is to find a set
of links most robust to some metric under some constraints.

\paragraph{Probabilistic Failures}

\begin{figure}[t]
\centering
        \includegraphics[width=0.78\textwidth]{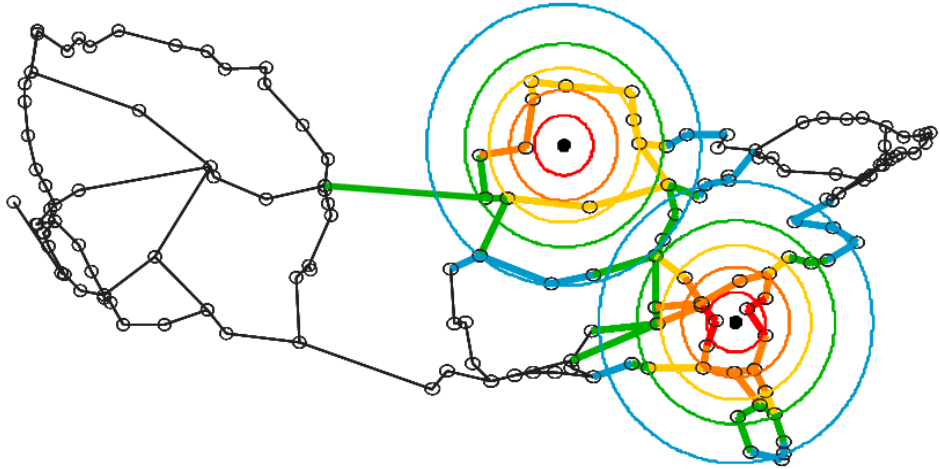}
\caption{The fiber backbone operated by a major U.S. network provider
\cite{CenturyLink} and an example of two attacks with probabilistic effects (the link colors
represent their failure probabilities). Taken from \cite{ieeetranAEGHSZ13}.}
\label{fig:probabilistic}
\end{figure}

An extensive work on the field of geographically correlated failures, generalizing previous work and providing interesting extensions
was made recently in \cite{DBLP:conf/infocom/AgarwalEGHSZ11,ieeetranAEGHSZ13} by \emph{Agarwal et al.} 
In these works they proposed a probabilistic failure model in the context of geographically correlated failures.
In this model an attack induces a spatial
probability distribution on the plane, specifying the damage
probability at each location (see Fig.~\ref{fig:probabilistic}). 
They consider probability functions
which are non-increasing functions of the distance between
the epicenter and the component, assuming these functions
have a constant description complexity. Then, they 
develop fully polynomial time approximation schemes to obtain the expected vulnerability of the
network in terms similar to the $TEC$ performance measure discussed earlier.
Their algorithms also allow the
assessment of the effects of several simultaneous events.
Another notable extension provided in these works is the study of the vulnerability of \emph{compound components},
that is a component consists by a finite number of network components, allowing \emph{lightpaths}
investigation, by noting a lightpath as a compound component comprised by a finite number of links.

Recently, \emph{Agarwal, Kaplan, and Sharir} presented in \cite{Agarwal:2013} an improved runtime algorithms
to the algorithms presented in these works, as an outcome of their result in their paper about the complexity of the union
of `random Minkowski sums' of $s-gons$ and $disks$ (where the disks' radius is a random non-negative number).
The novel techniques in these works are comprised with extensive use of tools from the field of \emph{computational geometry}.
and in particular, \emph{theory of arrangements} and \emph{randomized algorithms}.

\paragraph{Summary}
Summarizing the previous work in this emerging field of \emph{geographically correlated failures}, 
we saw that when considering the network's geographic layout to be \emph{deterministic},
various failure models were deeply studied in these works.
In our work we study the vulnerability of various \emph{spatial random network} structures.
That is networks which their components' locations (nodes and links) are probabilistically distributed on the plane.
Using stochastic modeling, geometric probability and numerical analysis techniques, we develop a novel approach
to develop an algorithm for finding locations in which circular disasters (of particular radius)
cause the expected most significant destruction. We also provide an algorithm to assess the impact of a random circular disaster
to the random network.
Before this work, that was one of the most significant unstudied problems in the field,
as similar problems for deterministic networks was recently studied extensively.
To the best of our knowledge, our work is the first to 
study such geographically correlated failures in the context of \emph{spatial random networks}.

\section{Our Stochastic Model and Problem Formulation}
\label{chap:models}
We study the model consisting of a random network 
immersed within a bounded convex set $B\subseteq\mathbb{R}^{2}$.
We consider nodes as stations and links as cables that connect stations.
Stations are represented through their coordinates in the plane $\mathbb{R}^{2}$,
and links are represented by straight line segments defined by their end-points.
The network is formed by a stochastic process in which the
location of stations is determined by a stochastic point
process. The distribution of the Poisson process is determined
by the intensity function $f(u)$ which represents the mean density of
nodes in the neighborhood of $u$. The number of nodes in a Borel
set $B$ follows a Poisson distribution with the parameter $f(B)$,
i.e., the integral over the intensity of all points in the Borel set.
Furthermore, the number of nodes in disjoint Borel sets are independent.
An introduction to Poisson Point Processes can be found in \cite{Stoyan:StochasticGeometryAndItsApplications}.

In our model we consider a network $N=((x',y')\thicksim PPP(f()),link\sim y(),c\sim H(),Rec)$
where nodes are distributed in the rectangle $Rec=[a,b]\times[c,d]$ through a \emph{Spatial
Non-Homogeneous Poisson Point Process }$PPP(f())$ where $f()$ is
the \emph{intensity function} of the $PPP$. Let $y(p_{i},p_{j})$
be the probability for the event of existence of a link between two nodes located
at $p_{i}$ and $p_{j}$ in $Rec$. $H(c,p_{i},p_{j})$ is the cumulative
distribution function of the link capacity between two connected nodes,
i.e. $P(C_{ij}<c)=H(c,p_{i},p_{j})$ where $p_{i}$ and $p_{j}$ are
the locations of nodes $i$ and $j$, respectively. It is reasonable
to assume that $y(p_{i},p_{j})$ and $H(c,p_{i},p_{j})$ can be computed
easily as a function of the distance from $p_{i}$ to $p_{j}$ and
that the possible capacity between them is bounded (denoted by $max\{capacity\}$).
We assume the following: the intensity function $f$ of the $PPP$,
$y$, and the probability density function $h$ (the derivative of $H$), 
are functions of constant description complexity. They are
continuously differentiable and Riemann-integrable over $Rec$, which also 
implies that our probability functions are of bounded variation over
$Rec$, as their derivatives receive a maximum over the compact set
$Rec\subseteq\mathbb{R}^{2}$. 

We note that Poisson process
is \emph{memoryless} and \emph{independent}, 
in particular, if we look on a specific region with high intensity
values then it is likely to have multiple stations in this region.
However, by our assumptions, it is reasonable that if a disaster destroys
some nodes in a region, then it is highly likely that it destroys other nodes located
nearby in this region. There are various point process models that can be used to model a random network,
for example, Mat\'ern hard-core process and Gibbs point processes class \cite{Stoyan:StochasticGeometryAndItsApplications},
which can be used where it is wanted that a random point is less likely to occur at the location $u$ if
there are many points in the neighborhood of $u$ or for
the hard-core process where a point $u$ is either ``permitted''
or ``not permitted'' depending whether it satisfies the hard-core
requirement (e.g. far enough from all other points). Methods similar to those presented here 
for the Poisson process can be applied to these models, as well.

\begin{definition}[Circular Cut]
A circular cut $D$, is a circle within $Rec$ determined by its center
point $p=(x_{k},y_{k})$ and by it's radius $r$, where $p \in Rec_{r}=[a+r,b-r]\times[c+r,d-r].$
\footnote{For simplicity we assume that $D$ can only appear
in whole within $Rec$.}
We sometimes denote the cut as ${\rm{cut}}(p,{r})$ and $Rec_r$ as $Rec$ (depending on the context). Such a cut destroys all fibers (links) that intersect it (including the interior of the circle).
\end{definition}

%

Our goal is to assess the vulnerability of the network to circular attacks (cuts). 
We consider a fiber to be destroyed (failed fiber) if it intersects the cut (including the interior of the circle),
namely, the attack's influence region.
The impact is measured by the \emph{total expected capacity of the intersected links} (TEC), or by the total expected number of intersected links,
which is equivalent to the previous measure when all fibers have capacity $1$ (see illustration in Fig.~\ref{fig:CircularCut}).
This will be done in three manners:

\begin{enumerate}
	\item Provide an algorithm for evaluating the total expected capacity of the intersected links (TEC) of the network with a circular attack in a \emph{specific location}.
	\item Provide an algorithm for finding an attack location (or a set of locations) which has the highest expected impact on the network, that is, a \emph{worst case attack} (one with the highest TEC value).
	\item Provide an algorithm to assess the expected impact on the network from a \emph{random circular attack}
	 (such that the attack's location is probabilistically distributed). 
\end{enumerate}

The first is done in section \ref{chap:EDCC}, the second and third in section \ref{chap:FSL}.

\begin{figure}[t]
	\vspace*{-1cm}
	\centering
		\includegraphics[width=0.5\textwidth]{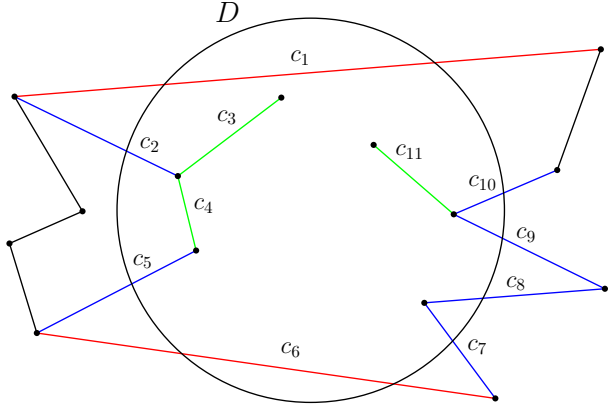}
		\caption{Example of a circular cut $D$ with TEC of $\sum\limits_{i=1}^{11} c_{i}$. The green, blue and red links depicts
		 the different intersected-link types: $\alpha-link$, $\beta-link$, and $\gamma-link$ respectively
		 (as defined in section~\ref{sec:general_idea}). The black links are not affected by $D$.}
	\label{fig:CircularCut}
\end{figure}
\section{Damage Evaluation Scheme}
\label{chap:EDCC}
\subsection{General Idea}
\label{sec:general_idea}
We develop a scheme to evaluate the total expected capacity of the intersected links (TEC) of the network with a circular attack in a \emph{specific location}. This, will be useful for the developing an algorithm that finds the attacks that have the highest expected impact on the network, as described in section~\ref{chap:FSL}.
First, we present the general idea behind it.
We divide the intersection of a cut $D$ (denoted also by ${\rm{cut}}(p,{r})$)
with a graph's edges into 3 independent types:
\begin{itemize}
\item $\alpha-link$, is the case where the entire edge is inside ${\rm{cut}}({p,{r}})$,
which means both endpoints of the edge are inside ${\rm{cut}}({p,{r}})$
(see illustration in Fig.~\ref{AlphaLink}).
\item $\beta-link$, is the case where one endpoint of the edge is inside ${\rm{cut}}({p,{r}})$
and the other endpoint is outside of ${\rm{cut}}({p,{r}})$ (see illustration in Fig.~\ref{BetaLink}).
\item $\gamma-link$, is the case where both endpoints are outside of
${\rm{cut}}({p,{r}})$ and the edge which connects the endpoints intersects
${\rm{cut}}({p,{r}})$ (see illustration in Fig.~\ref{GammaLink}).
\end{itemize}

\begin{figure}[t]

\centering
\includegraphics[scale=0.8]{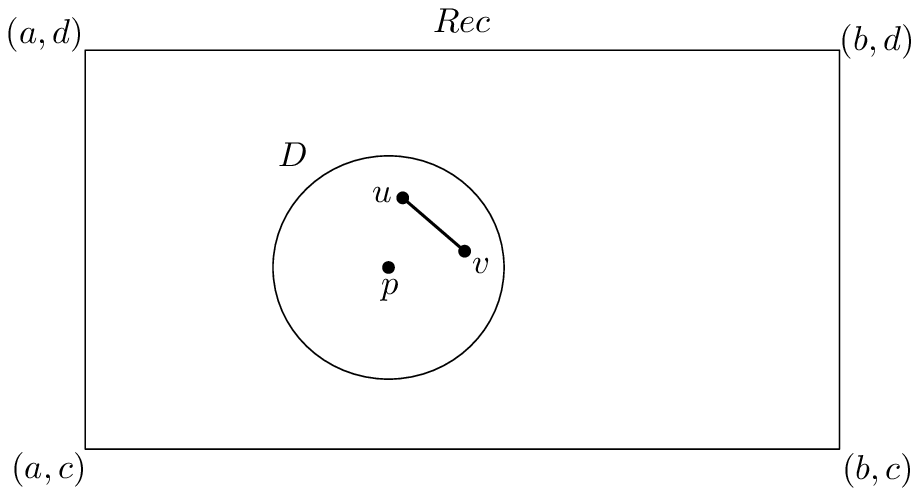}


\caption{$\alpha-link$ edge illustration}
\label{AlphaLink}


\begin{minipage}[t]{0.45\textwidth}%
\begin{center}
\includegraphics[scale=0.8]{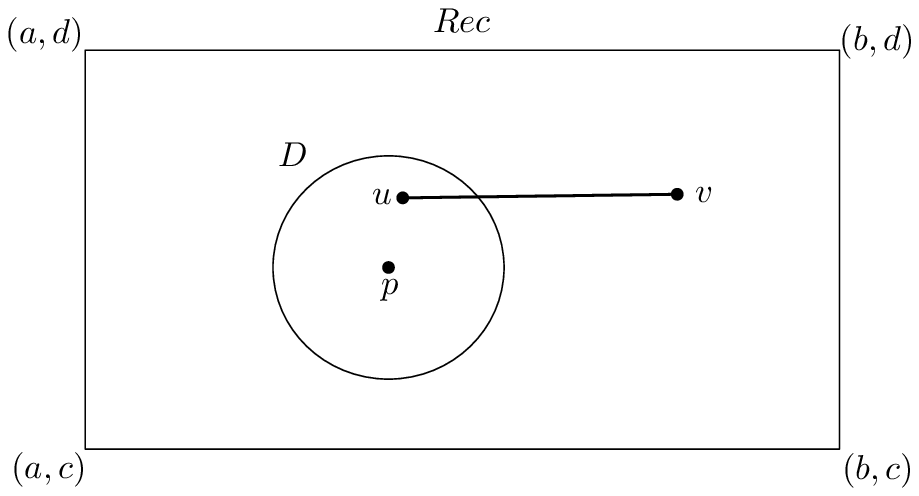}
\par\end{center}

\vspace*{-0.5cm}\caption{$\beta-link$ edge illustration}
\label{BetaLink}
\end{minipage}\hfill{}%
\begin{minipage}[t]{0.45\textwidth}%
\begin{center}
\includegraphics[scale=0.8]{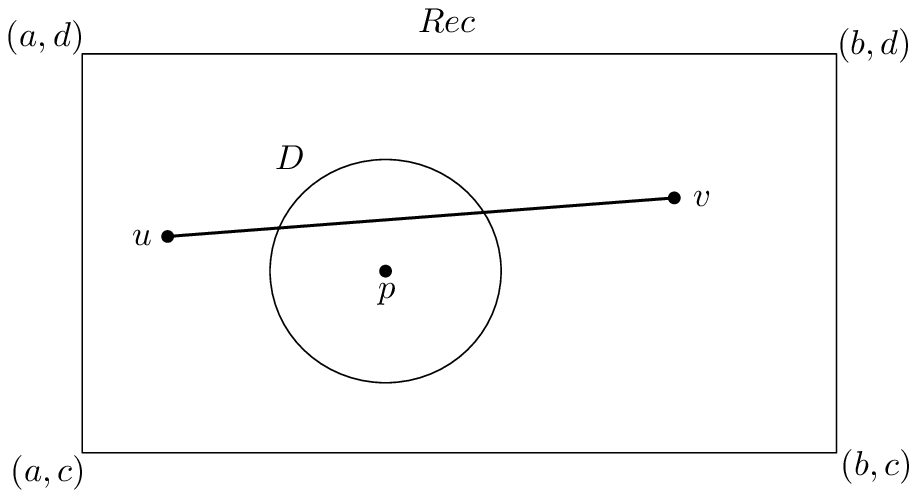}
\par\end{center}

\vspace*{-0.5cm}

\caption{$\gamma-link$ edge illustration}
\label{GammaLink}
\end{minipage}
\end{figure}

Note that any intersected link with cut $D={\rm{cut}}({p,{r}})$ belongs to exactly one of the above types.
Fig.~\ref{fig:CircularCut} depicts a circular cut with the different types of intersected links.
For $\sigma\in\{\alpha,\beta,\gamma\}$, let ${X_{\sigma}}$ be the total capacity of all the intersected $\sigma-link$ type edges with cut $D$, namely the damage determined by $\sigma-links$. Thus, it holds that the expected capacity of the intersected links of types $\alpha$, $\beta$ and $\gamma$ is determined by:
\begin{eqnarray}
E[X_{\alpha}] = \frac{1}{2}\iint\limits _{D}\iint\limits _{D}f(u)f(v)g(u,v)\,dudv\\
E[{X}_{\beta}] = \iint\limits _{Rec-D}\iint\limits _{D}f(u)f(v)g(u,v)\,dudv\\
E[{X}_{\gamma}]= \frac{1}{2}\iint\limits _{Rec-D}~\iint\limits _{Rec-D}f(u)f(v)g(u,v)\mathcal{I}(u,v,D)\,dudv\;
\end{eqnarray}
where ${g(u,v) = y(u,v)\intop_{0}^{max\{capacity\}}h(c,u,v)c\, dc}$ is the 
expected capacity between two nodes at points $u$ and $v$ 
(determined by the probability of having a link between them, times the expected capacity of this link).
$\mathcal{I}(u,v,D)$ is the indicator function, giving one if the segment $(u,v)$ intersects the circle $D$ and zero otherwise.
 
Denote by $X=X_{\alpha}+X_{\beta}+X_{\gamma}$, the total damage determined by all the intersected links with cut $D$. 
Due to the linearity of expectation, we get that the total expected capacity of the intersected links (TEC) is
 $E[X]=E[X_{\alpha}]+E[X_{\beta}]+E[X_{\gamma}]$.
Hence, it is sufficient to evaluate the expected damage caused by each of the 3 types of the intersected links separately.
Summing them all together is the total expected damage caused by $D=cut({p,{r}})$. 

%
%

\subsection{Evaluating the Damage of a Circular Cut Algorithm}
\label{sec:Evaluating}

\begin{algorithm}[t]
 \caption{EvaluateDamageCircularCut (EDCC): Approximation algorithm for evaluating the damage of a circular cut.}
\label{alg:EDCC}
\begin{algorithmic}[1]
\STATE capacity \textbf{procedure} EDCC($N,{cut}({p,{r}}),\epsilon$) 
\STATE $\displaystyle D \leftarrow Circle({\rm{cut}}({p,{r}}))$ { \footnotesize //$D$ is a representation of the circular cut.}
\STATE $ \displaystyle Grid \leftarrow ComputeGrid(Rec,r,\epsilon)$
\FOR {every $u,v\in Grid$} 
\STATE { \footnotesize // Compute the expected capacity between two points $u$ and $v$:}\\
${g(u,v) \leftarrow y(u,v)\intop_{0}^{max\{capacity\}}h(c,u,v)c\, dc}$ \\ 
\ENDFOR
\STATE  The following steps are calculated over $Grid$:
\STATE $ E[{X_{\alpha}}] \leftarrow  \frac{1}{2}\iint\limits _{D}\iint\limits _{D}f(u)f(v)g(u,v)\,dudv$
\STATE $ E[{X}_{\beta}] \leftarrow \iint\limits _{Rec-D}\iint\limits _{D}f(u)f(v)g(u,v)\,dudv$
\STATE $ E[{X}_{\gamma}]\leftarrow \frac{1}{2}\iint\limits _{Rec-D}f(u)$\textbf{evaluateGamma}$(u,D,Grid)\,du$ 
\RETURN $E[{X}_{\alpha}]+E[{X}_{\beta}]+E[X_{\gamma}]$ \\
\item [\textbf{Procedure evaluateGamma($u,D,Grid$)}]
\STATE \label{alg:Ku1} Create two tangents $t_{1}, t_{2}$ to $D$ going out from $u$. 
\STATE \label{alg:Ku2} Denote by $K_{u}$ the set of points which is bounded
by $t_{1},t_{2},D$ and the boundary of the rectangle $Rec$ (see Fig.~\ref{rec-K}). 
\RETURN $\iint\limits _{K_{u}}f(v)g(u,v)\,dv$

\end{algorithmic}
\end{algorithm}

We present an approximation algorithm $EDCC$ (see pseudo-code in Algorithm \ref{alg:EDCC})
for evaluating the total expected capacity of the intersected links (TEC) 
of the network with a circular attack (cut) $D$ in a \emph{specific location}. Later, we use this algorithm to find attack locations with the (approximately) highest  expected impact on the network.
We give two different approximation analyses for algorithm $EDCC$ output.
One is an additive approximation, and one is multiplicative. Although additive approximations in general are better than multiplicative,
our analysis of the additive approximation depends on the maximum 
value of the functions $f(\cdot)$ and $g(\cdot,\cdot)$ over $Rec$, 
where $g(\cdot,\cdot)$ stands for the expected capacity between 
two points $u,v\in Rec$. While the maximum of $f(\cdot)$ and $g(\cdot,\cdot)$ over $Rec$
can be high and affect the running time of the algorithm, 
for practical uses on ''real-life'' network it is usually low enough to make the running time reasonable.
The multiplicative analysis which does not depend on the maximum of $f(\cdot)$ and $g(\cdot,\cdot)$ over $Rec$, 
depends on the variation bound of $f(\cdot)f(\cdot)g(\cdot,\cdot)$, namely a constant $M$ which is an upper bound
for the derivative of $f(\cdot)f(\cdot)g(\cdot,\cdot)$ over $Rec$.
Define an additive $\epsilon$-approximation to the TEC as a quantity $\tilde C$ satisfying
${C}- \epsilon \leq \tilde C \leq {C}+ \epsilon$
for  $\epsilon>0$, where ${C}$ is the actual expected capacity intersecting the cut.
Similarly, define a multiplicative $\epsilon,\varepsilon$-approximation to the cut capacity  as a quantity
$\tilde C$ satisfying $(1-\varepsilon){C}-\epsilon\leq \tilde C\leq(1+\varepsilon){C}+\epsilon$.

The algorithm uses numerical integration based on the division of $Rec$ into squares of 
edge length $\Delta$ (we refer to $\Delta$ as the \emph{``grid constant''}). The different approximations are pronounced 
in the $ComputeGrid(Rec,r,\epsilon)$ function in the algorithm,
which determines the grid of constant $\Delta$.
Let $Grid$ be the set of these squares center-points. 
The algorithm evaluates the integrals numerically, using the points in $Grid$.
Intuitively: the denser is the grid, the more accurate the results, at the price 
of requiring additional time to complete.
In section \ref{sec:numerical_accuracy} we examine the relation between the accuracy parameter $\epsilon$ and the grid constant $\Delta$,
this relation determines the implementation of $ComputeGrid(Rec,r,\epsilon)$
and the running time of our algorithm.

\begin{figure}
\centering
        \includegraphics[width=0.78\textwidth]{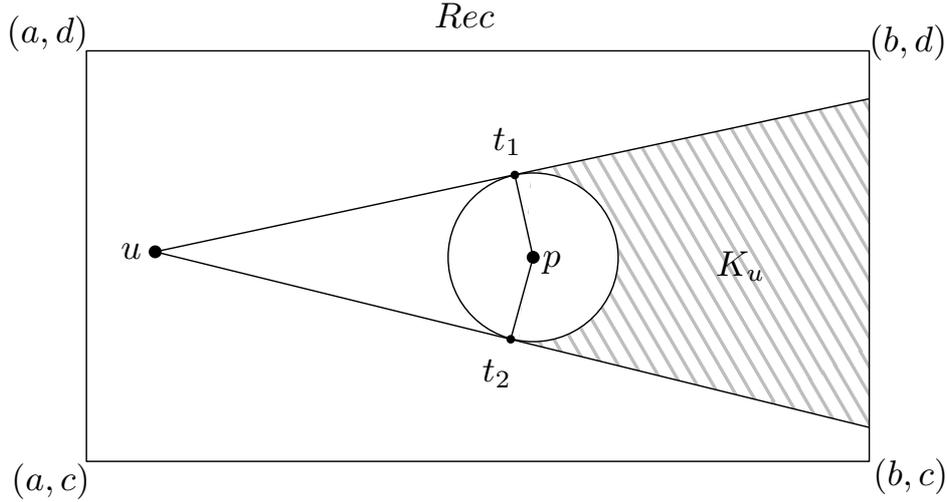}
 \caption{The area $K_u$ of the shadow by the circular cut from the point $u$.}
       \label{rec-K}
\end{figure}

When computing the expectation of the $\gamma$-links damage caused by a cut $D$, we use the following lemma:
\begin{lemma}
For every node $u\in Rec\setminus D$, the set $K_{u}$
(see lines \ref{alg:Ku1}-\ref{alg:Ku2} in Algorithm \ref{alg:EDCC} and illustration in Fig. \ref{rec-K}) satisfies: for every $v\in K_{u}$,
$(u,v)$ intersects $D$, and for every $w\in Rec \setminus (K_{u}\cup D)$,
$(u,w)$ does not intersect with $D$.
\end{lemma}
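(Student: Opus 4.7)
The plan is to reduce this to a standard visibility/tangent cone argument for a disk. Since $u \notin D$, the two lines $t_1, t_2$ through $u$ tangent to $D$ are well-defined; let $T_1, T_2$ denote the tangency points on $\partial D$. The key geometric fact I would establish first is the following cone characterization: a ray emanating from $u$ intersects the closed disk $D$ if and only if its direction lies in the closed angular interval from $uT_1$ to $uT_2$ (the wedge that contains $D$). This is immediate because the tangent rays $uT_1, uT_2$ bound exactly the set of directions from $u$ that can meet $\partial D$; any ray outside this wedge passes entirely on one side of $D$, while any ray strictly inside the wedge meets $\partial D$ in two points, and the two tangent rays themselves meet $D$ only at $T_i$.

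Next I would describe the region $K_u$ explicitly. The two tangent rays from $u$ through $T_1$ and $T_2$, extended beyond the tangency points until they hit $\partial Rec$, together with the far arc of $D$ (the arc of $\partial D$ from $T_1$ to $T_2$ on the opposite side of $u$) and the piece of $\partial Rec$ between the two tangent-ray exit points, form a closed curve. $K_u$ is the region enclosed by this curve. In particular, a point $v \in Rec \setminus D$ lies in $K_u$ if and only if (a) $v$ lies in the tangent wedge from $u$ that contains $D$, and (b) $v$ lies on the far side of $D$ from $u$, meaning the ray from $u$ through $v$ first crosses the near arc of $D$ before reaching $v$.

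With these two descriptions in hand, both directions of the lemma follow directly. For $v \in K_u$, the direction from $u$ to $v$ lies in the tangent cone, so the ray $u \to v$ meets $D$; and since $v$ lies past the near arc, the first intersection point of the ray with $\partial D$ occurs strictly between $u$ and $v$, hence the segment $(u,v)$ intersects $D$. Conversely, for $w \in Rec \setminus (K_u \cup D)$, either (i) $w$ lies outside the tangent cone, in which case the entire ray from $u$ through $w$ misses $D$ and in particular $(u,w)$ does not meet $D$; or (ii) $w$ lies in the cone but in the bounded ``triangular'' region between $u$ and the near arc of $D$. In case (ii) the segment $(u,w)$ terminates before reaching $\partial D$, because the near arc separates $u$ from $D$'s interior along every ray in the cone, so the segment still does not intersect $D$.

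The main (and only real) obstacle is the bookkeeping in case (ii): being careful that the two connected components of (cone $\cap\; Rec) \setminus D$ — the small one between $u$ and the near arc, and the large one beyond $D$ (which is $K_u$) — are correctly distinguished, and that $K_u$ as defined by the four boundary pieces (two tangent segments, far arc, piece of $\partial Rec$) really is the second component. Once this is spelled out, everything else is a direct consequence of the tangent cone characterization.
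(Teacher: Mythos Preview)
Your argument is correct and follows essentially the same visibility/tangent-cone approach as the paper. The only cosmetic difference is the choice of separator: the paper defines the convex region $A_u$ (the intersection of the tangent triangle with $Rec$) and observes that removing the two radii from the disk center to the tangency points disconnects $A_u$ into the component containing $u$ and the component $K_u$, then invokes convexity to keep the segment $(u,v)$ inside $A_u$ and a star-domain observation for the near region $S_u=A_u\setminus(K_u\cup D)$; you instead use the ray-direction characterization of the tangent cone and the near/far arc of $\partial D$ to distinguish the two components. Both framings are equivalent and equally short.
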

\begin{proof}
The set, $A_u$ bounded by both tangents and the boundary of the rectangle,
containing the circle, is
convex, as it is the intersection of convex sets (a triangle and a rectangle).
This set, minus the union of radii from the center of the 
circle to both tangents is a disconnected set. 
The point $u$ belongs to the connected set containing $u$, whereas the point $v$ 
belongs to the connected set $K_u$, which is is separated from $u$ by the union of radii.
That is, any path between $u$ and $v$ contained in the convex set $A_u$, 
must intersect with the union of radii, and thus with the circle.

For any point $p\in Rec$ which is not on the tangents' rays, the line segment $(u,p)$ is
either completely inside $A_u$ or completely in $Rec\setminus A_u$ (except for point $u$).
For any point $w_1\in Rec \setminus A_u$, since $A_u$ is bounded by 
the rays from $u$ which are tangent to the circle, it follows that the 
segment $(u,w_1)$ is completely outside $A_u$ (except $u$), and thus does not intersect
with the circle. 
The set $S_u = A_u \setminus (K_u \cup D)$ is a star domain, such that 
for any point $w_2 \in S_u$ the segment $(u,w_2)$ is completely contained in $S_u$,
and thus does not intersect with the circle.

\end{proof}
Thus, we run over every point $u$ in $Grid$ which is outside the cut,
and compute all possible $\gamma$-link damage emanating from $u$, using procedure $evaluateGamma(u,D,Grid)$.
Summing them all together and dividing by 2, due to double-counting, we get the total expected $\gamma$-link damage. 

\subsection{Numerical Accuracy and Running Time Analysis}
\label{sec:numerical_accuracy}
\subsubsection{Geometric Preliminaries}
\label{subsec:pre}
In this section we give theorems regarding the accuracy and the running time of Algorithm~\ref{alg:EDCC} ($EDCC$). 
The first theorem is in section~\ref{subsec:additive}, gives an additive bound on the error in the calculation of the 
damage caused by a cut of radius $r$ using numerical integration with grid constant $\Delta$. 
This gives us the relation between the accuracy parameter $\epsilon$ and the grid constant $\Delta$. This relation
gives the implementation of $ComputeGrid(Rec,r,\epsilon)$ in $O(1)$
by choosing $\Delta$ small enough such that the error will be not more than $\epsilon$. 
Then, in Theorem \ref{thm:runtime_add} we give a bound on the running time of the algorithm for any accuracy parameter $\epsilon>0$.
 
In section~\ref{subsec:multiplicative}, we give a combined multiplicative and additive bound on the error for a grid constant $\Delta$,
which is different from the additive bound by being independent on the maxima of the functions $f$ and $g$,
thus, for a given accuracy parameters $\varepsilon, \epsilon$, one can choose $\Delta$ which is independent on the maxima of the functions $f$ and $g$, small enough such that the error will be not more than $(1+\varepsilon)C +\epsilon$, where $C$ is the actual TEC of the cut. 
With this multiplicative-additive approximation we can bound the running time independently on the maxima of the functions $f$ and $g$,
useful for the case these maxima are high (usually in "real-life" networks it is not the case and the additive approximation will have a reasonable running time, as described in section \ref{sec:Evaluating}).

We restrict our results to the case where $\Delta<r/2$, as otherwise the approximation is too crude
to consider.

\begin{figure}
\centering
        \includegraphics[width=0.6\textwidth]{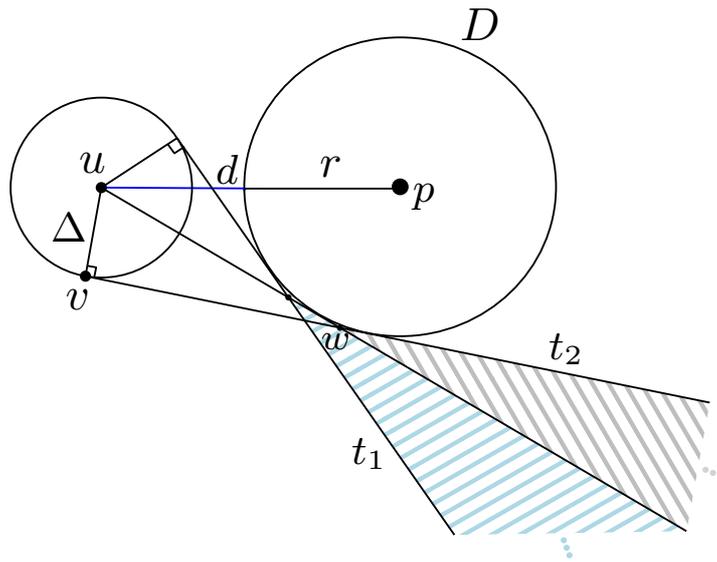} 
 \caption{$t_1$ and $t_2$ are tangents to a circle of radius $\Delta$ centered at $u$ and to the circular cut.
 The colored areas depicts the extremum of difference for possible $\gamma$-links endpoints emanating from a point within the circle centered at $u$ at one side of the cut.}
       \label{fig:approx3}
\end{figure}

\begin{figure}
\centering
        \includegraphics[width=0.6\textwidth]{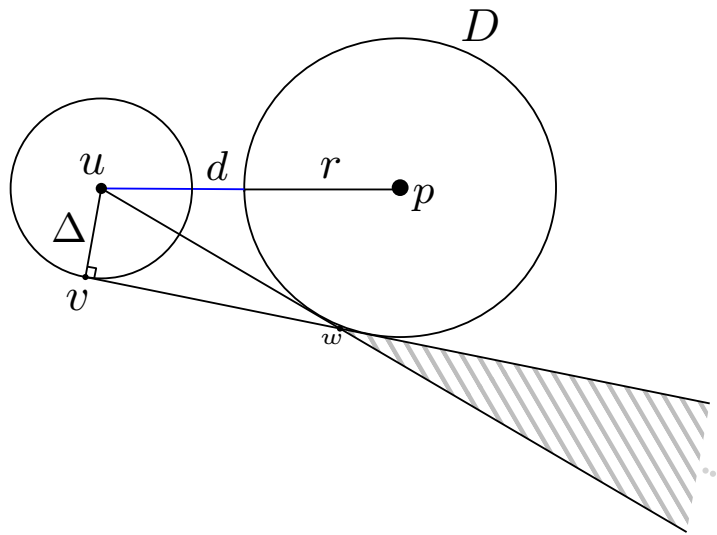} 
 \caption{The area in grey depicts the extremum of difference for possible $\gamma$-links endpoints emanating from a point within the circle centered at $u$ at one side of the cut.}
       \label{fig:approx1}
\end{figure}

\begin{figure}
\centering
        \includegraphics[width=0.38\textwidth]{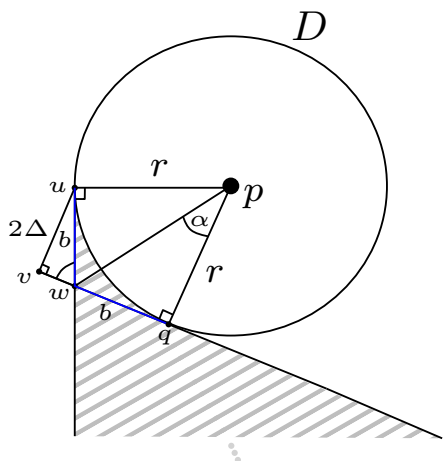} 
 \caption{The perpendicular tangents from a point $u$ on the circular cut and from a point $v$ at distance $2\Delta$ from the circular cut.
 The area in grey depicts the extremum of difference for possible $\gamma$-links endpoints emanating from $u$ and $v$ at one side of the cut.}
       \label{fig:2delta}
\end{figure}

Some technical results are needed for proving the theorems in this section.
We first notice that, given a point $u$ at a distance $d$ from a circular cut,
and a point $v$ at a distance $\Delta$ from $u$, the  difference between
the area of endpoints of $\gamma$-links starting at point $u$ to those starting
at point $v$ is bounded in the area between 
(a) the segment from $u$ to the boundary of 
$Rec$ tangent to the cut
(b) the segment from $v$ to the boundary of 
$Rec$ tangent to the cut
(c) the boundary of $Rec$.
Plus the area bounded by (a), (b) and the boundary of the circular cut.
Furthermore, the extrema of the area of difference are obtained in the cases where 
the segment from $v$ to the boundary of 
$Rec$ tangent to the cut is also tangent to the circle centered at $u$ with radius $\Delta$. See Fig. \ref{fig:approx3} and \ref{fig:approx1}.

For an accurate numeric bound of the error in the theorems of this section, one should add a factor of $2$ to the results of the following lemmas
(to bound also the area of possible $\gamma$-links endpoints obtained by the additional tangents emanating from $u$ and $v$ to the other side of the circular cut).

\begin{lemma}
\label{lem:gamma_error}
The area bounded by (a) the tangent at a point $u$ on the circumference of 
the circular cut (b) the tangent to the cut from any point $v$ at distance $2\Delta$ from $u$ 
(c) the circumference of the cut and (d) the boundary of $Rec$, is
bounded by $c\sqrt{\Delta}$ for some constant $c$. See Fig. \ref{fig:2delta}.
\end{lemma}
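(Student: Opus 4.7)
My plan is to reduce the area bound to an angular deviation bound between the two tangent lines, using elementary circle geometry. Place coordinates so the circular cut has center $O$ at the origin and radius $r$, with $u = (r,0)$ on its circumference; then the tangent to the cut at $u$ is the vertical line $x = r$. Let $v$ be any point outside the cut with $|uv| = 2\Delta$; I wish to bound the area of the region enclosed by this tangent, the tangent from $v$ to the cut, the arc of the cut, and the boundary of $Rec$. The extremal configuration is $v = (r + 2\Delta, 0)$ (radially outward), since this maximizes $|Ov|$ over the circle $\{w : |w-u| = 2\Delta\}$ and hence the angular separation.

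The main technical step is to show that the angle $\phi$ between the two tangent lines satisfies $\phi = O(\sqrt{\Delta/r})$. Writing $d = |Ov|$, the tangent from $v$ touches the cut at a point $p$ with $\angle pOv = \arccos(r/d)$, and since the two tangents in question are perpendicular to $Ou$ and $Op$ respectively, the angle between them equals $\angle uOp \leq \angle pOv + \angle uOv$. For the first term, $d \leq r + 2\Delta$ gives $\angle pOv \leq \arccos(r/(r+2\Delta))$; expanding $1 - \cos\phi = \phi^2/2 + O(\phi^4)$ yields $\angle pOv \leq 2\sqrt{\Delta/r}$. For the second term, the law of sines in triangle $Ouv$ gives $\sin(\angle uOv) \leq |uv|/|Ov| \leq 2\Delta/(r - 2\Delta) = O(\Delta/r)$, using the standing assumption $\Delta < r/2$. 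I expect this step to be the main obstacle, chiefly in verifying that the correct branch of the tangent from $v$ is selected so that the two tangent rays actually bound a single well-defined region on one side of the cut; the numerical estimates themselves are standard.

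With the angular bound in hand, the area bound follows by direct geometry. Extending both tangent rays backward, they meet at a common apex $q$; a short trigonometric computation shows $q$ lies at distance $O(\sqrt{r\Delta})$ from $u$. The entire region of interest is then contained in the triangle with apex $q$ and base on the portion of $\partial Rec$ cut by the two tangent rays. Since $Rec$ has bounded diameter $L$, this triangle has opening angle $\phi = O(\sqrt{\Delta/r})$ and sides of length $O(L)$, so its area is at most $\tfrac12 L^2 \phi = O(L^2 \sqrt{\Delta/r})$. The supplementary region near the cut, bounded by the two tangent segments and the arc from $u$ to $p$, lies inside a disk of radius $O(\sqrt{r\Delta})$ around $q$ and contributes only $O(r\Delta)$, which is absorbed into the $\sqrt{\Delta}$ term. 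Summing, the total area is at most $c\sqrt{\Delta}$ for a constant $c$ depending only on $r$ and the diameter of $Rec$, as claimed.
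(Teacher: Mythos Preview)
Your proposal is correct and follows essentially the same route as the paper: bound the angle $\phi$ between the two tangent lines by $O(\sqrt{\Delta/r})$, then bound the wedge-shaped region by $O(\mathcal{D}^2\phi)$ plus a lower-order contribution near the arc.

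Two points worth flagging. First, your extremal configuration differs from the paper's. You place $v$ radially outward to maximize $|Ov|$; the paper instead takes $v$ so that the segment $uv$ is perpendicular to the tangent from $v$ (equivalently, the tangent from $v$ is tangent to the disk of radius $2\Delta$ about $u$), and then solves exactly for the apex distance $b=d(u,w)$ via the identity $\tan 2\alpha = 2\tan\alpha/(1-\tan^2\alpha)$, obtaining $r\Delta\le b^2\le 2r\Delta$. Your claim that the radial $v$ is extremal is not actually justified (maximizing $|Ov|$ maximizes $\angle pOv$ but not obviously $\angle uOp$), but this is harmless: your subsequent bound $\angle uOp\le\angle uOv+\angle vOp$ with $|Ov|\le r+2\Delta$ and $|Ov|\ge r$ holds for \emph{every} admissible $v$, so the argument goes through without ever invoking the extremal claim. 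Second, the paper carries out an explicit case split according to whether the two tangents hit the same side of $Rec$, perpendicular sides, or parallel sides, triangulating in the latter cases and picking up factors of $1$, $2$, $3$ respectively in front of $\mathcal{D}^2\sqrt{2\Delta/r}$. You absorb all of this into the single bound $\tfrac12 L^2\phi$, which is fine for the asymptotic statement but loses the explicit constants.
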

\begin{proof}
The extremum of the difference between the two tangents is when the
tangent to the circle is also a tangent to the circle of radius $2\Delta$ centered at $u$,
i.e., when the segment of length  $2\Delta$ starting at point $u$ is perpendicular
to the tangent at its other endpoint, $v$. See Fig. \ref{fig:2delta}.

Let $q$ be the point of intersection of the second tangent and the circular cut,
and let $w$ be the point of intersection between the two tangents. Let $p$ be the
center of the circle. We have $\angle u p w=\angle w p q=\alpha$
and angle $\angle u w v=2\alpha$. 

Let $b=d(u,w)=d(w,q)$ and $d(w,v)=a=\sqrt{b^2-4\Delta^2}$. We have $\tan\alpha=b/r$,
where $r$ is the radius of the circle,
and $\tan2\alpha=\dfrac{2\Delta}{\sqrt{b^2-4\Delta^2}}$. Using $\tan2\alpha=2\tan\alpha/(1-\tan^2\alpha)$
one obtains $b^2=(\Delta^2r^2+\Delta r^3)/(r^2-\Delta^2).$ Thus, $ {r\Delta} \leq b^2 \leq {2r\Delta} $. 
Assume that both tangents hit the same side of $Rec$, the area of the triangle bounded by both tangents and the boundary of $Rec$ is
bounded by 
\begin{equation} \label{eq:sameside}
\frac{1}{2} \mathcal{D}^2\sin2\alpha\leq \mathcal{D}^2\sin\alpha\leq \mathcal{D}^2\tan\alpha\leq \mathcal{D}^2\sqrt\frac{2\Delta}{r},
\end{equation}
where $\mathcal{D}$ is the diagonal of $Rec$.
Now, if the tangents hit different sides of $Rec$, there are two cases:

If they hit perpendicular sides, then the area bounded by them and the sides is a quadrangle, taking a line segment from the corner point of $Rec$ which is the intersection of the perpendicular sides to $w$ we triangulate this area, obtaining two triangles.
The angle, near point $w$ of each triangle is bounded by $2\alpha$ which is bounded by $\pi/2$, since we assume $\Delta < r/2$.
Thus from (\ref{eq:sameside}) the area is bounded by $2\mathcal{D}^2\sqrt{\dfrac{2\Delta}{r}}$.

If the tangents hit parallel sides we can triangulate the area bounded by them and the sides, by taking two line segments from 
the corners of $Rec$ within this area to the point $w$, obtaining three triangles. 
Using similar arguments as the previous case, we obtain that the area is bounded by $3\mathcal{D}^2\sqrt{\dfrac{2\Delta}{r}}$.

Finally, the area bounded by the segments $uw$ and $wq$ and the circular arc $uq$ is bounded 
by the area of triangle $\triangle uwq$, which, in turn is bounded by $b^2\leq 2r\Delta\leq \mathcal{D}^2\sqrt{\dfrac{2\Delta}{r}} $.
\end{proof}

\begin{lemma}
\label{lem:gamma_error_gen}
For a point $u$ at a distance $d$ from a circular cut $\mathrm{cut}(p,r)$ and a given $\Delta>0$,
the area between the circumference of the cut, the boundary of $Rec$,  the tangent to the cut from $u$ and the tangent to the cut  from a
point $v$ located at a distance $\Delta$ from $u$ is bounded by $c\sqrt{\Delta}$ for some constant $c$. See Fig. \ref{fig:approx1}.
\end{lemma}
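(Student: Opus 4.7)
The plan is to mirror the argument of Lemma \ref{lem:gamma_error}, reducing to an extremal configuration for $v$ and then bounding the angle $\alpha$ between the two tangents $L_u$ (from $u$) and $L_v$ (from $v$) to the cut. First I would argue that the target area is maximised when $v$ is placed so that $L_v$ is simultaneously tangent to the cut and to the circle of radius $\Delta$ centred at $u$: as $v$ varies on the $\Delta$-circle around $u$, the angular position of $L_v$ is a smooth function whose critical points occur precisely where $L_v$ meets the $\Delta$-circle only at $v$, i.e., where $L_v \perp uv$; away from such a $v$, moving $v$ tangentially to the $\Delta$-circle along $L_v$ preserves $L_v$, so the extrema of $\alpha$ indeed lie at the perpendicularity condition. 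This reduces the problem to a one-parameter family of configurations indexed by $d = \mathrm{dist}(u,\mathrm{cut}(p,r))$.

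In this extremal configuration let $w = L_u \cap L_v$ and let $u', v'$ be the tangent points of $L_u, L_v$ on the cut. The key ingredients are the tangent-length identity $|uu'| = \sqrt{d(d+2r)}$, the standard right angles $pu' \perp L_u$ and $pv' \perp L_v$, and the extremal perpendicularity $uv \perp L_v$. A trigonometric manipulation in the spirit of the derivation of $b^2 \leq 2r\Delta$ in Lemma \ref{lem:gamma_error} --- writing two tangent-length relations at $w$ and eliminating auxiliary lengths via the perpendicularity --- should yield $\tan\alpha \leq c_1 \sqrt{\Delta/r}$ with a constant $c_1$ independent of $d$. Heuristically, $d = 0$ is the worst case (there the map $u \mapsto L_u$ has unbounded derivative), and positive $d$ only improves the bound.

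With $\tan\alpha$ controlled, the target area decomposes into the same two pieces as in Lemma \ref{lem:gamma_error}: a region bounded by $L_u$, $L_v$ and $\partial Rec$ --- a triangle, or a quadrangle triangulated via the corners of $Rec$ into at most three triangles --- whose total area is $O(\mathcal{D}^2 \tan\alpha)$; and a ``lens'' region bounded by segments of $L_u, L_v$ and the arc $u'v'$, whose area is controlled by the triangle $\triangle u' w v'$ and hence by $O(r\Delta)$. Both summands are $O(\sqrt{\Delta})$, yielding the stated bound.

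The main obstacle will be the uniform-in-$d$ angular estimate. A naive derivative bound for the tangent-direction map gives an angular rate of order $1/\sqrt{r\,d}$ along a path from $u$ to $v$, which is not integrable as $d \to 0$; it is the extremal perpendicularity that restores the correct $\sqrt{\Delta}$ scaling, so the trigonometric identities must be carried out carefully enough to see that all $d$-dependent corrections enter with the favourable sign.
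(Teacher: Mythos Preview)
Your overall structure --- extremal configuration, angle bound, area decomposition into a triangular/quadrangular piece and a lens piece --- is right, but the paper takes a different and shorter route that completely sidesteps the ``main obstacle'' you flag in your last paragraph.

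Instead of proving a uniform-in-$d$ bound on the angle, the paper splits on $d \lessgtr \Delta$. If $d < \Delta$, then $v$ lies within $d + \Delta < 2\Delta$ of the circumference of the cut, and Lemma~\ref{lem:gamma_error} applies directly with $v$ in the role of the near-boundary point --- no new computation needed. If $d \geq \Delta$, the paper writes $\sin(\angle uwv) = \Delta / (\sqrt{2rd + d^2} + b)$, where $w = L_u \cap L_v$ and $b = |wu'| = |wv'|$, and then simply drops $b$ and $d^2$ from the denominator and uses $d \geq \Delta$ to get $\sin(\angle uwv) < \Delta/\sqrt{2rd} < \sqrt{\Delta/(2r)}$. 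The area bounds then follow exactly as in Lemma~\ref{lem:gamma_error}, with the lens piece bounded by $b^2 \sin(\angle uwv) < \sqrt{2r^3\Delta}$ using $b < r$ (both tangent points lie in the same quadrant of the cut). The case split turns the non-integrable $1/\sqrt{rd}$ rate you noticed into a harmless two-line estimate.

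By contrast, your plan hinges on the claim that a trigonometric manipulation ``should yield'' $\tan\alpha \leq c_1\sqrt{\Delta/r}$ uniformly in $d$, backed only by the heuristic that $d=0$ is worst. You correctly identify this as the crux, but you have not actually done it: the monotonicity in $d$ is plausible but unproved, and Lemma~\ref{lem:gamma_error}'s derivation of $b^2 \leq 2r\Delta$ used the specific geometry of $u$ on the circumference, so it does not transfer automatically. If you want to carry your version through, you must actually compute $\alpha(d,\Delta)$ in the extremal configuration and check the sign of the $d$-dependence; otherwise, adopting the paper's case split removes the difficulty entirely. (A minor point: your $O(r\Delta)$ bound on the lens piece is what Lemma~\ref{lem:gamma_error} gives at $d=0$, but for $d > \Delta$ the paper only gets $O(\sqrt{r^3\Delta})$ via $b < r$; both are $O(\sqrt{\Delta})$ for fixed $r$, so this does not affect the conclusion.)
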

\begin{proof}
Let $w$ be the point of intersection between the two tangents (from point $u$ to the cut 
and from point $v$ to the cut). We have $\sin(\angle uwv)={\Delta}/(\sqrt{(r+d)^2-r^2}+b)$, where $b$ is the distance 
between the point of intersection of each of the tangents with the cut and the point $w$.
we have $\sin(\angle uwv)={\Delta}/(\sqrt{2rd+d^2}+b)$. 

Now for $d<\Delta$ the point $v$ is located within a distance of $d+\Delta<2\Delta$ from the circumference of the 
cut, and the Lemma follows from Lemma \ref{lem:gamma_error}.

If $d>\Delta$ we have 
$\sin(\angle uwv)<{\Delta}/(\sqrt{2rd})<{\Delta}/(\sqrt{2r\Delta})$. Thus, if the tangents hit the same side of $Rec$ the triangular area between the tangents and the side is bounded by $\mathcal{D}^2\sqrt{\Delta/r}$,
where $\mathcal{D}$ is the diagonal of $Rec$.

If the tangents hit different sides, then we can triangulate the area bounded by them and the sides similarly as in Lemma \ref{lem:gamma_error}.
Denote by $\alpha=\angle uwv$ the angle between the two tangents.
For each triangle, the angle near point $w$ is bounded by $\alpha$ which is bounded by $\pi/2$ (since we assume $\Delta<r/2$).
Thus, the area bounded by the tangents and $Rec$ (which is not containing the circle) is bounded by
$3\mathcal{D}^2\sqrt{\Delta/r}$.

Similarly to Lemma \ref{lem:gamma_error} the area between the two tangents and the circle
is also bounded by $b^2\sin(\pi-\angle uwv)=b^2\sin(\angle uwv)<\sqrt{2r^3\Delta}$, as $b< r$ since both tangents intersect
the same quadrant of the cut.
\end{proof}

\subsubsection{Additive Approximation}
\label{subsec:additive}

\begin{theorem}
\label{thm:acc_add}
For a grid of constant $\Delta$, a point $p \in Rec$, and the result $\tilde C$ for $cut(p,r)$ obtained by Algorithm \ref{alg:EDCC}, 
it holds that $C-\epsilon<\tilde C<C+\epsilon$,
where $C$ is the actual TEC value for $cut(p,r)$, and $\epsilon=c_0\cdot\sqrt{\Delta}$ for some constant $c_0>0$ 
that depends on the maximum values of $f(\cdot)$ and $g(\cdot,\cdot)$, their variation bound, the sides of $Rec$ and the radius $r$.

\end{theorem}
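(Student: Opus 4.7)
The plan is to bound the three contributions $E[X_\alpha]$, $E[X_\beta]$, and $E[X_\gamma]$ separately, since $\tilde C=\tilde E[X_\alpha]+\tilde E[X_\beta]+\tilde E[X_\gamma]$ and $C=E[X_\alpha]+E[X_\beta]+E[X_\gamma]$. The first two terms involve midpoint-rule numerical integration of a fixed, continuously differentiable integrand $f(u)f(v)g(u,v)$ over a four-dimensional domain whose boundary (the rectangle $Rec$ and the circle $\partial D$) does not depend on $u$, so the standard quadrature analysis yields an additive error of order $\Delta$. The third term is the delicate one: the inner region $K_u$ moves with $u$, producing only Hölder-$\tfrac{1}{2}$ dependence on $u$ via the symmetric difference $K_u\triangle K_{u'}$, and this yields the dominant error of order $\sqrt{\Delta}$.

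For $E[X_\alpha]$ and $E[X_\beta]$ I would exploit that $f$, $g$, and their first derivatives are bounded on $Rec$, so the 4D midpoint-rule error per grid cell of side $\Delta$ is $O(\Delta^{5})$ with a constant proportional to the variation bound of $f\!\cdot\! f\!\cdot\! g$; there are $O(1/\Delta^{4})$ such cells, for a smooth-interior error of $O(\Delta)$. An additional error arises from cells straddling $\partial D\times D$ or $D\times\partial D$: there are $O(r^{3}/\Delta^{3})$ of them, each contributing at most $\max(f^{2}g)\cdot\Delta^{4}$, for a total of $O(r^{3}\Delta\cdot\max(f^{2}g))$. Both contributions are $O(\Delta)$ and therefore dominated by $\sqrt{\Delta}$ (since we assume $\Delta<r/2<1$).

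For $E[X_\gamma]$, write $I(u):=\iint_{K_u} f(v)g(u,v)\,dv$ and split the error into (i) the outer midpoint-rule error of $\tfrac{1}{2}\iint_{Rec\setminus D} f(u)I(u)\,du$, and (ii) the inner error in approximating $I(u)$ by its grid sum $\tilde I(u)$ at each grid point $u$. Source (ii) is handled exactly as in the $\alpha$/$\beta$ case: $f(v)g(u,v)$ is $C^{1}$ in $v$ and $\partial K_u$ has total length $O(\mathcal{D}+r)$, so $|I(u)-\tilde I(u)|=O(\Delta)$, producing a total contribution of $O(\Delta\cdot\max f\cdot|Rec|)$ after the outer summation. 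Source (i) is the main one: Lemma~\ref{lem:gamma_error_gen}, together with the boundary case of Lemma~\ref{lem:gamma_error}, implies that for any $u,u'\in Rec\setminus D$ with $|u-u'|\le\Delta$, the area of $K_u\triangle K_{u'}$ is $O(\sqrt{\Delta})$, and therefore
\begin{equation*}
|f(u)I(u)-f(u')I(u')|\;\le\;c\cdot\max(fg)\cdot\sqrt{\Delta},
\end{equation*}
that is, $u\mapsto f(u)I(u)$ is Hölder-$\tfrac{1}{2}$ in $u$. A routine computation shows that the midpoint rule on a Hölder-$\tfrac{1}{2}$ integrand incurs error $O(\Delta^{5/2})$ per cell of side $\Delta$, so summation over the $O(|Rec|/\Delta^{2})$ outer cells yields $O(|Rec|\cdot\max(fg)\cdot\sqrt{\Delta})$.

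Combining the three estimates gives $|C-\tilde C|\le c_{0}\sqrt{\Delta}$, where $c_{0}$ can be expressed as a polynomial in $\max_{Rec}f$, $\max_{Rec\times Rec}g$, their variation bounds, the diagonal $\mathcal{D}$, and $1/\sqrt{r}$ (the last factor entering through Lemma~\ref{lem:gamma_error}). The main obstacle is the $\gamma$ term: although $f$, $g$, and the local geometry are all smooth, the map $u\mapsto I(u)$ inherits only Hölder-$\tfrac{1}{2}$ regularity because the tangent rays bounding $K_u$ drift at rate $1/\sqrt{\Delta}$ per unit displacement of $u$. The two geometric lemmas of Section~\ref{subsec:pre} do precisely this heavy lifting; once that Hölder-$\tfrac{1}{2}$ rate is in hand, the remainder of the proof is a bookkeeping exercise verifying that every other source of discretisation error is genuinely $O(\Delta)$ and hence subdominant.
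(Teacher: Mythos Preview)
Your proposal is correct and follows essentially the same route as the paper: both arguments treat the $\alpha$- and $\beta$-link integrals by standard quadrature (smooth-integrand error of order $\Delta$ plus an $O(\Delta)$ boundary-annulus contribution), identify the $\gamma$-link term as dominant, and use Lemmas~\ref{lem:gamma_error} and~\ref{lem:gamma_error_gen} to bound $|K_u\triangle K_w|$ by $c\sqrt{\Delta}$ for $|u-w|\le\Delta$, which is exactly the mechanism that forces the leading error to be $\sqrt{\Delta}$ rather than $\Delta$. The only difference is presentational: you package the $\gamma$-analysis as ``$u\mapsto f(u)I(u)$ is H\"older-$\tfrac12$, hence the midpoint rule loses a $\sqrt{\Delta}$,'' whereas the paper decomposes the $\gamma$-error into three explicit pieces (boundary squares of $K_u$, variation of the integrand on $K_u\cap K_w$, and the symmetric difference $K_u\triangle K_w$) and bounds each directly; the substance and the resulting constant $c_0\sim \mathrm{const}\cdot\mathcal{D}^2|Rec|\,T/\sqrt{r}$ are the same.
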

\begin{proof} 
By standard arguments on numerical integration the error in calculating the integral 
over any region is bounded by $M\Delta$ times the area of integration (that is bounded by
the area of $Rec$, $|Rec|$), where $M$ is a bound on the variation rate for $f(\cdot)$, $g(\cdot)$,
and the product $f(\cdot)f(\cdot)g(\cdot,\cdot)$ over $Rec$.
Additionally, the cumulative error value $|C-\tilde C|$ consists of the following:

Any point in a grid square is within a distance of $\Delta/\sqrt{2}<\Delta$ of the grid point (square center).
The additional difference in the integral over $\alpha$ and $\beta$ links is bounded by the integral over the area of inaccuracy 
around the circular cut (grid squares which are partially in the cut and partially outside). This is bounded by 
an annulus of radii $\left[ r-\Delta/\sqrt{2},r+\Delta/\sqrt{2} \right]$ around the center of the circular cut of area  $2\sqrt{2}\pi r\Delta$. Thus, we obtain an error which is bounded by $2\sqrt{2}\pi r\Delta T |Rec|$,
where $T$ is a bound on the maximum value of $f(\cdot)$, $g(\cdot)$, and the product $f(\cdot)f(\cdot)g(\cdot,\cdot)$ over $Rec$.

The additional error is in the calculation of $\gamma$-links, and obtained in three terms, one term is determined 
in the procedure $evaluateGamma(u,D,Grid)$ where the area of inaccuracy is around $K_u$ (grid squares which are partially in $K_u$ and partially outside). This area is bounded by $2\sqrt{2}\mathcal{D}\Delta$ where $\mathcal{D}$ is the diagonal length of $Rec$. This gives an error term which is bounded by $2\sqrt{2}\mathcal{D}\Delta T|Rec|$.

The second error term in the $\gamma$-links calculations is obtained by considering the change in the functions $f(u)$ w.r.t $f(w)$, 
and $g(u,\cdot)$ w.r.t $g(w,\cdot)$ in $K_u \cap K_w$, for a point $w$ within distance $\Delta$ from $u$.
For convenience, for a point $u$, denote $\tilde f_u(v) = f(u)f(v)g(u,v)$.
Taking into account the change in the integrated function $\tilde f_u(v)$ and $f_w(v)$
over $K_u \cap K_w$, for a point $w$ within distance $\Delta$ from $u$,
we obtain an error, bounded by
$\iint\limits _{K_u \cap K_w} \left(\tilde f_u(v) + M\Delta \right) \,dv - \iint\limits _{K_u \cap K_w} \tilde f_u(v)\, dv
\leq \iint\limits _{K_u \cap K_w}  M\Delta \,dv \leq M\Delta|Rec|$.
Thus, obtaining an error bounded by $\Delta M|Rec|^2$. 

The third, and the most significant error term, is obtained using Lemma \ref{lem:gamma_error_gen} which implies that for any point
$u\in Rec\setminus D$ in a grid square (except grid squares which are partially in $D$ and partially outside), the area of symmetric difference between $K_u$ and $K_w$ for the grid point (square center) $w$ nearest to $u$
(such that the euclidean distance $d(u,w)<\Delta$)
is bounded by
$a\sqrt{\Delta}$, where $a$ is some constant (see section~\ref{subsec:pre}), 
depending on the radius of the cut $r$ and the sides of the rectangle $Rec$. 
Thus, we obtain an error bounded by $a|Rec|T\sqrt\Delta$.

Taking into account the errors in this numerical integration from all terms above, one obtains 
that the leading term in the error, as $\Delta \rightarrow 0$, is $|C-\tilde C|\leq const\mathcal{D}^2|Rec|T\sqrt{\dfrac{\Delta}{r}}$,
where $\mathcal{D}$ is the diagonal length of $Rec$ (see section~\ref{subsec:pre}).
Thus, the accuracy
depends on $\sqrt\Delta$, as well as on the sides of the rectangle, the radius of the cut, the maxima 
of $f(\cdot)$, $g(\cdot,\cdot)$ and their variation bound in $Rec$.
\end{proof}
Using Theorem \ref{thm:acc_add} and the given in section~\ref{subsec:pre}, the function
$ComputeGrid(Rec,r,\epsilon)$ in the algorithm can 
be implemented by selecting the value of $\Delta$ guaranteeing that the additive error will be at most $\epsilon$.

We now give a bound on the running time of the $EDCC$ algorithm for any additive accuracy parameter $\epsilon>0$.
\begin{theorem}
\label{thm:runtime_add}
For a $Rec$ of area $A$ with diagonal length $\mathcal{D}$, attack of radius $r$, and an additive accuracy parameter $\epsilon>0$, 
the total running time of Algorithm \ref{alg:EDCC} ($EDCC$) is
$O\left(\dfrac{A^{10}\mathcal{D}^{16}T^8}{\epsilon^8r^4} + \dfrac{A^{10}T^4M^4}{\epsilon^4} \right)$,
where $T$ is a bound on the maximum value of $f(\cdot)$, $g(\cdot)$, and the product $f(\cdot)f(\cdot)g(\cdot,\cdot)$ over $Rec$,
and $M$ is a bound on the variation rate for these functions over $Rec$.
\end{theorem}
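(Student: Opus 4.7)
The plan is to translate the error bound of Theorem \ref{thm:acc_add} into an explicit requirement on the grid constant $\Delta$, count the number of grid points this forces, and then bound the work done by Algorithm \ref{alg:EDCC} as a function of that count. The accuracy analysis in the proof of Theorem \ref{thm:acc_add} yields an error of the form
\[
|C-\tilde C| \;\leq\; c_1\,\mathcal{D}^{2}A\,T\sqrt{\Delta/r} \;+\; c_2\,A^{2}M\Delta \;+\; \text{lower order terms},
\]
where the first term comes from the $\gamma$-link shadow estimate via Lemma \ref{lem:gamma_error_gen} and the second from numerical integration of $f(\cdot)f(\cdot)g(\cdot,\cdot)$ over $Rec$.

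First I would force each of the two leading terms to be at most $\epsilon/2$ separately. The $\sqrt{\Delta}$-term gives the requirement $\Delta \le \Delta_1 := c_1'\,\epsilon^{2}r/(\mathcal{D}^{4}A^{2}T^{2})$, while the $M\Delta$-term gives $\Delta \le \Delta_2 := c_2'\,\epsilon/(A^{2}M)$. Choosing $\Delta = \min(\Delta_1,\Delta_2)$ makes $ComputeGrid(Rec,r,\epsilon)$ return a grid fine enough to guarantee additive accuracy $\epsilon$ and costs only $O(1)$ arithmetic to compute.

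Next I would count grid points: since the grid squares have side $\Delta$ and tile $Rec$ of area $A$, the number of grid points is $N=\Theta(A/\Delta^{2})$. Substituting the two choices of $\Delta$ gives two contributions
\[
N_1 \;=\; O\!\left(\tfrac{\mathcal{D}^{8}A^{5}T^{4}}{\epsilon^{4}r^{2}}\right),\qquad
N_2 \;=\; O\!\left(\tfrac{A^{5}M^{2}}{\epsilon^{2}}\right),
\]
and $N=O(N_1+N_2)$. The three expectations $E[X_\alpha],E[X_\beta],E[X_\gamma]$ in the algorithm are numerically evaluated as double integrals over pairs of grid points: the $\alpha$ and $\beta$ parts are straightforward nested sums over $D\cap Grid$ and $(Rec\setminus D)\cap Grid$, each taking $O(N^{2})$ time, and for $\gamma$ the outer loop runs over each $u\in Grid\setminus D$ while $evaluateGamma$ sums $f(v)g(u,v)$ over grid points $v\in K_u$, where $K_u$ is identified in $O(N)$ time per $u$ using the two tangent rays from $u$. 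Thus the total work is $O(N^{2})$.

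Finally I would plug in $N=O(N_1+N_2)$ to obtain
\[
O(N^{2}) \;=\; O(N_1^{2}+N_2^{2}) \;=\; O\!\left(\tfrac{A^{10}\mathcal{D}^{16}T^{8}}{\epsilon^{8}r^{4}} + \tfrac{A^{10}T^{4}M^{4}}{\epsilon^{4}}\right),
\]
matching the claim (absorbing the $T^{4}$ factor that appears only in the additive writing for uniformity with the statement). The main obstacle I anticipate is bookkeeping: verifying that the precompute step for $g(u,v)$ on line 5 and the $K_u$-restriction inside $evaluateGamma$ do not secretly add an extra polynomial factor, and checking that the $\sqrt{\Delta}$-dependence of the $\gamma$ error (which ultimately produces the $\epsilon^{-8}$ scaling) really is the only super-linear dependence on $1/\epsilon$ present; the remaining error terms are linear in $\Delta$ and therefore controlled by the milder $\Delta_2$ bound.
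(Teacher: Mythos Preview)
Your proposal is correct and follows essentially the same approach as the paper: extract the two leading error terms from Theorem~\ref{thm:acc_add}, solve each for $\Delta$, count grid points as $N=\Theta(A/\Delta^2)$, and bound the work by $O(N^2)$. The paper phrases the last step as a case split on which of the two error terms dominates (depending on whether $M=O(\mathcal{D}^4T^2/(\epsilon r))$ or not), whereas you take $\Delta=\min(\Delta_1,\Delta_2)$ and use $N=\max(N_1,N_2)=O(N_1+N_2)$; these are equivalent. Your remark about the stray $T^4$ in the second summand is apt: the paper's own derivation actually yields $O(A^{10}M^4/\epsilon^4)$ there, so the $T^4$ in the statement is cosmetic.
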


\begin{proof}
The algorithm is based on performing numerical integration over pairs of grid points (square-center points). The denser is the grid, the more pairs of points we have in the grid, thus the running time is determined by the grid constant $\Delta$ (the squares' side length) which is set by $ComputeGrid(Rec,r,\epsilon)$ function at the beginning of the algorithm. The number of grid points in the rectangle is $A/\Delta^2$. 
Thus the running time is at most proportional to
the number of pairs of grid points, which is $O\left(A^2/\Delta^4\right)$.

Now, from the given in section~\ref{subsec:pre} and the proof of Theorem~\ref{thm:acc_add} we obtain that as $\Delta \rightarrow 0$
$$\epsilon=O\left(\mathcal{D}^2 AT\sqrt{\frac{\Delta}{r}} + \Delta MA^2 \right)$$ 
Thus, when $M=O\left(\dfrac{\mathcal{D}^4T^2}{\epsilon r}\right)$ (reasonable in practical usages),
$\Delta = \Omega\left(\dfrac{\epsilon^2 r}{\mathcal{D}^4A^2T^2}\right)$,
and the total running time of the algorithm is 
$$O\left(A^2/\Delta^4\right) = O\left(\frac{A^{10}\mathcal{D}^{16}T^8}{\epsilon^8r^4}\right).$$
Otherwise, if $M=\omega\left(\dfrac{\mathcal{D}^4T^2}{\epsilon r}\right)$, 
we obtain that $\Delta=\Omega\left(\dfrac{\epsilon}{MA^2}\right)$,
and the running time of the algorithm is
$$O\left(\dfrac{A^{10}M^4}{\epsilon^4}\right).$$
\end{proof}

\subsubsection{Multiplicative Approximation}
\label{subsec:multiplicative}
Since the constant in Theorem \ref{thm:acc_add} depends on the maximum value of the functions $f$ and $g$, which may be undesirable 
in case these maxima are high, we have the following theorem, giving a combined  additive and multiplicative accuracy
with the constants independent of the maxima of $f$ and $g$. Using the following Theorem \ref{thm:acc_mult}, the function $ComputeGrid(Rec,r,\epsilon)$ can be modified to
a new function $ComputeGrid(Rec,r,\epsilon,\varepsilon)$
which can be implemented by selecting the value of $\Delta$ guaranteeing that the additive error will be at most $\epsilon$
and the multiplicative error will be at most $\varepsilon$, as described in the following theorem.

\begin{theorem}
\label{thm:acc_mult}
For a grid of constant $\Delta$, a point $p \in Rec$, and the result $\tilde C$ for $D=cut(p,r)$ obtained by Algorithm~\ref{alg:EDCC}, 
it holds that $(1-\varepsilon)C-\epsilon<\tilde C<(1+\varepsilon)C+\epsilon$,
where $C$ is the actual TEC value for $cut(p,r)$, for $\epsilon=c_1\cdot\sqrt{\Delta}$, $\varepsilon=c_2\cdot\sqrt{\Delta}$,
such that $c_1$ and $c_2$ depend only on $Rec$, $r$, and $M$ the bound on the variation
of $f(\cdot)$, $g(\cdot)$, and $f(\cdot)f(\cdot)g(\cdot,\cdot)$ over $Rec$, but are independent on their maximum values.
\end{theorem}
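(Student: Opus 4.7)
The plan is to mirror the proof of Theorem \ref{thm:acc_add}, keeping its five-way decomposition of $|\tilde C - C|$ into (i) the standard Riemann-sum error for numerical integration, (ii) the annulus of inaccuracy of width $O(\Delta)$ around $\partial D$ (relevant to $\alpha$- and $\beta$-links), (iii) the strip of inaccuracy of width $O(\Delta)$ around $\partial K_u$ (relevant to $\gamma$-links), (iv) the pointwise replacement of $\tilde f_u$ by $\tilde f_w$ on $K_u \cap K_w$, and (v) the symmetric difference $K_u \triangle K_w$ of area $O(\sqrt\Delta)$ guaranteed by Lemma \ref{lem:gamma_error_gen}. Terms (i) and (iv) are already of the form $M \Delta \cdot \mathrm{poly}(|Rec|)$, involving only the variation constant $M$; I simply keep them as contributions to the additive $\epsilon = c_1\sqrt\Delta$.

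The work is to rewrite the three $T$-dependent terms without invoking $T = \sup f f g$. The key device is a \emph{layering} argument. For term (ii), the outer annulus $A$ of width $\Delta/\sqrt 2$ around $\partial D$ is the outermost of a family $A_1 = A, A_2, \dots, A_N$ of $N = \Theta(r/\Delta)$ disjoint concentric annuli tiling $D$. The variation bound gives $|\iint_{A_i} f f g - \iint_{A_{i+1}} f f g| \le M \Delta \cdot |A_i|$, so by telescoping each $I_i$ is within $O(M r^2 \Delta)$ of the average $\bar I = (1/N)\iint_D f f g \le C/N$. After integrating the remaining variable over $Rec$, this yields $\iint_{A \times Rec} f f g \le c\,(\Delta/r)\,C + c\, M r^2 \Delta \cdot |Rec|$ for some absolute $c$, which is stronger than the $\sqrt\Delta$ factor we need. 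The same layering (with $K_u$ sliced into strips parallel to its tangent bounding rays) handles term (iii). Thus terms (ii) and (iii) contribute $O(\Delta/r)\cdot C$ multiplicatively and an additive term proportional to $M$ and geometric constants, all independent of $T$.

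The main obstacle is term (v). Here Lemma \ref{lem:gamma_error_gen} only gives area $O(\sqrt\Delta)$, and the region in question is a thin wedge emanating from a tangent-intersection point whose location depends on $u$. To avoid $T$, I plan to slide the wedge angularly: rotating the tangent direction from $u$ produces a one-parameter family of near-congruent wedges whose union covers (up to the variation error) a sector of the $\gamma$-integration domain of area $\Theta(1)$ at scale $|Rec|$. Packing $N' = \Theta(1/\sqrt\Delta)$ essentially disjoint translates/rotates of the wedge inside this sector and telescoping as before yields $\iint_{K_u \triangle K_w} f f g \le c'\sqrt\Delta \cdot (\text{full }\gamma\text{-integral from }u) + c'\,M\cdot|Rec|\cdot\sqrt\Delta$. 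After the final outer integration $\iint_{Rec \setminus D} f(u)\,(\cdots)\,du$, the first piece contributes a multiplicative error proportional to $E[X_\gamma] \le C$ with factor $c_2\sqrt\Delta$, and the second piece an additive term depending only on $M$, $|Rec|$, $r$.

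Summing the five contributions gives $|\tilde C - C| \le c_2\sqrt\Delta \cdot C + c_1\sqrt\Delta$ with $c_1, c_2$ depending on $|Rec|$, $r$, and $M$ but not on $T$, which is exactly $(1-\varepsilon)C - \epsilon \le \tilde C \le (1+\varepsilon)C + \epsilon$ with $\varepsilon = c_2\sqrt\Delta$ and $\epsilon = c_1\sqrt\Delta$. The hardest bookkeeping step is the packing/telescoping for the angular sliding in (v): one must check that the $N'$ wedge-copies are actually disjoint (or overlap only $O(1)$ times) and remain inside a region whose $f f g$-integral is bounded by a part of $C$ plus a variation term, so that the $\sqrt\Delta$ factor genuinely comes out of a $C$-dependent bound rather than a $T$-dependent one.
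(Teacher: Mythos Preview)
Your proposal is correct and follows essentially the same strategy as the paper: both replace the $T$-dependent ``area $\times \sup$'' bounds of Theorem~\ref{thm:acc_add} by a layering/telescoping argument that compares the thin error region to $\Theta(1/\sqrt{\Delta})$ (or $\Theta(1/\Delta)$) near-congruent copies of itself packed inside a domain whose $f f g$-integral is dominated by $C$, yielding a multiplicative $O(\sqrt{\Delta})$ factor against $C$ plus an additive $M$-dependent remainder. Your treatment of terms (i)--(iv) matches the paper's almost verbatim; in particular the concentric-annuli layering for (ii) is exactly what the paper does ``in polar coordinates.''

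The only noteworthy difference is in term (v). The paper packs the wedge $K_u\triangle K_w$ not by angular rotation but by \emph{parallel translation}: it sets up coordinates with $x$ along the angle bisector of the two tangents, observes $\|R(x)-R'(x)\|\le a\sqrt{\Delta}$, and then compares the error strip $\{R(x)-a\sqrt{\Delta}\le y\le R(x)\}$ to the successive strips $\{R(x)-(k{+}1)a\sqrt{\Delta}\le y\le R(x)-ka\sqrt{\Delta}\}$ for $k=1,\dots,\lfloor 2r/(a\sqrt{\Delta})\rfloor$. These strips are trivially disjoint and, since they penetrate to depth $2r$ behind the tangent, lie inside $K_u\cup D$, whose integral is exactly the $\beta+\gamma$ contribution from $u$ and hence bounded by $C$ after the outer integration. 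This sidesteps precisely the bookkeeping you flag as hardest: with parallel strips there is nothing to check about disjointness or containment, whereas your angular sliding requires arguing that the rotated wedge copies overlap only $O(1)$ times and remain inside a $C$-controlled region. Both realizations work, but the paper's is the more direct one.
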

\begin{proof} 
From the proof of Theorem \ref{thm:acc_add}, the standard error in the numerical integration over the grid depends
only on the grid constant $\Delta$, the radius of the cut $r$, and the bounded variation rate $M$ of the integrated function.

For a point $u\in Rec\setminus D$ and the closest grid point $w\in Rec\setminus D$ nearest to $u$ (with a distance at most $\Delta$ from each other), denote by $R'$ the segment of the tangent going out from $u$ to one side of $D$ within $Rec$,
similarly, denote by $R$ the segment of the tangent going out from $w$ to the same side of $D$ within $Rec$.
Denote by $R(x)$ the point on $R$ with coordinate $x$, and similarly for $R'(x)$,
where the $x$-axis is taken to be the line that does not intersect with $D$ and going through
the angle bisector for the angle between $R$ and $R'$ \footnote{There are two pairs of vertical angels by formed by the intersection of $R$ with $R'$ and thus, two angle bisectors, the angle bisector of one pair is intersecting the cut and the other does not intersecting the cut, 
we refer to the pair of which their angle bisector does not intersect the cut.}.
By the proof of Lemma~\ref{lem:gamma_error_gen} we obtain that the euclidean distance $||R'(x)-R(x)||$ satisfies
$||R'(x)-R(x)|| < a\sqrt{\Delta}$ for any $x\in Rec$, where $a$ is a constant depending on the sides of $Rec$ (see section~\ref{subsec:pre}),
this is obtained directly from the proof of Lemma~\ref{lem:gamma_error_gen} by using similarity of triangles properties.
For convenience, for a point $u$, denote $\tilde f_u(v) = f(u)f(v)g(u,v)$, and write it in Cartesian coordinates $f_u(x,y)$ 
(with respect to the axis described above).
The TEC from $\beta-links$ and $\gamma-links$ going out from $w$ is bounded by
$$\iint\limits _{{K_u}\cup D} \tilde f_u(x,y)\,dxdy + \int dx \int\limits_{R(x)-a\sqrt{\Delta}}^{R(x)} \left(\tilde f_u(x,y) +M(a\sqrt{\Delta} + \Delta)\right)\,dy,$$
thus, the difference between the TEC values of $\beta-links$ and $\gamma-links$ emanating from $u$
to $\beta-links$ and $\gamma-links$ emanating from $w$ is bounded by

\begin{equation}
\label{eq:mult1}
\int \int\limits_{R(x)-a\sqrt{\Delta}}^{R(x)} \left(\tilde f_u(v) + M\sqrt{\Delta}(a + \sqrt{\Delta})\right)\,dxdy 
\end{equation}
Taking strips of length $a\sqrt{\Delta}$ we obtain the following:
\small
\begin{align}
\label{eq:seq}
\int \int\limits_{R(x)-a\sqrt{\Delta}}^{R(x)} \left(\tilde f_u(v) + M\sqrt{\Delta}(a + \sqrt\Delta) \right) \,dxdy  
&\leq \int \int\limits_{R(x)-2a\sqrt{\Delta}}^{R(x)-a\sqrt{\Delta}} \left( \tilde f_u(v) + M\sqrt{\Delta}(2a + \sqrt\Delta) \right) \,dxdy \nonumber \\
\leq \int \int\limits_{R(x)-3a\sqrt{\Delta}}^{R(x)-2a\sqrt{\Delta}} \left( \tilde f_u(v) + M\sqrt{\Delta}(3a + \sqrt\Delta) \right) \,dxdy 
&\leq \dotsb 
\end{align}
\normalsize
Note that 
\begin{equation}
\label{eq:mult2}
\iint\limits _{{K_u}\cup D} \tilde f_u(x,y)\,dxdy \geq \int \int\limits_{R(x)-2r}^{R(x)} \tilde f_u(v)\,dxdy
\end{equation}
Thus, when integrating over $K_u \cup D$ by summing integrations of strips with length $a\sqrt{\Delta}$,
at least $\dfrac{2r}{a\sqrt{\Delta}}$ such strips are needed.

Now taking the average of the sequence (\ref{eq:seq}) of length $\dfrac{2r}{a\sqrt{\Delta}}$ (note that we allow fraction of an element, e.g., the sequence can be a less than one long) and from (\ref{eq:mult2}) we obtain that
the difference between the TEC values (given in \ref{eq:mult1}) is bounded by
\small
\begin{align}
&\frac{a\sqrt{\Delta}}{2r} \left[ \iint\limits _{{K_u}\cup D} \tilde f_u(x,y)\,dxdy \right.
+\left. \int \int\limits_{R(x)-a\sqrt{\Delta}}^{R(x)} M\sqrt{\Delta}(a + \sqrt\Delta) \,dxdy \right. \nonumber \\
&+\left. \dotsb + \int \int\limits_{R(x)-2r}^{R(x)} M\sqrt{\Delta}(\frac{2r}{a\sqrt{\Delta}}a + \sqrt{\Delta}) \,dxdy \right] \nonumber \\
&\leq \frac{a\sqrt{\Delta}}{2r} \iint\limits _{{K_u}\cup D} \tilde f_u(x,y)\,dxdy 
+ aM\mathcal{D}\sqrt{\Delta}(r + a\sqrt{\Delta} + \Delta),
\end{align}
\normalsize
where $\mathcal{D}$ is the diagonal length of $Rec$.

Thus, in leading terms, as $\Delta \rightarrow 0$, we obtain an error with multiplicative factor $\dfrac{a}{2r} \sqrt{\Delta}$
and an additive term of $aMr\mathcal{D}\sqrt{\Delta}$ between the two TEC values. 
Using similar techniques, one can bound the error obtained by the areas of inaccuracy:

(i) Grid squares that are partially in cut $D$ and partially outside.
This error can be bounded using similar methods when representing
the integrated function in polar coordinates with respect to the center of $D$.
Since this area is bounded by an annulus of radii $\left[ r-\Delta/\sqrt{2},r+\Delta/\sqrt{2} \right]$ around the center of $D$, 
we can take strips of length $\Delta$. Thus,
this can be bounded by a multiplicative factor of $b\Delta$ and additive term of $c\Delta$
for $b$ and $c$ which are depended linearly on the radius $r$.

(ii) For a node $u\in Rec\setminus D$, grid squares that are partially in $K_u$ and partially outside.
This area is bounded by $2\sqrt{2}\mathcal{D}\Delta$. Thus, using similar methods, we can take strips of length $\Delta$,
obtaining an error bounded by
a multiplicative factor of $b'\Delta$ and additive term $c'\Delta$, for $b'$ and $c'$ which are depended linearly on $\mathcal{D}$.

Overall, the total accumulated error obtained by the modified Algorithm~\ref{alg:EDCC}, in leading terms as $\Delta \rightarrow 0$, is 
with multiplicative factor $\dfrac{a|Rec|}{2r} \sqrt{\Delta}$
and an additive term of $aMr\mathcal{D}|Rec|\sqrt{\Delta}$, 
where $a \leq const\mathcal{D}^2 \dfrac{1}{\sqrt{r}}$ (see section~\ref{subsec:pre}).

\end{proof}

The following theorem gives a bound on the running time of the modified $EDCC$ algorithm, independent of the maxima of $f$ and $g$.

\begin{theorem}
\label{thm:runtime_mult}
For a $Rec$ of area $A$ with diagonal length $\mathcal{D}$, attack of radius $r$,
a multiplicative accuracy parameter $\varepsilon>0$ and an additive accuracy parameter $\epsilon>0$, 
the total running time of the modified Algorithm~\ref{alg:EDCC} ($EDCC$) is
$O\left( \dfrac{A^{10}\mathcal{D}^{16}}{\varepsilon^8r^{12}} + \dfrac{A^{10}\mathcal{D}^{24}M^8}{\epsilon^8r^4} \right)$
where $M$ is a bound on the variation rate of $f(\cdot)$, $g(\cdot)$, and the product $f(\cdot)f(\cdot)g(\cdot,\cdot)$ over $Rec$.
\end{theorem}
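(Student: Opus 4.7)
The plan is to mirror the structure of the proof of Theorem \ref{thm:runtime_add}: reduce the running-time question to choosing the largest admissible grid constant $\Delta$ satisfying the accuracy requirements, and then substitute $\Delta$ into the grid-size bound $O(A^2/\Delta^4)$. The only substantive difference from the additive case is that now two accuracy parameters $\varepsilon$ (multiplicative) and $\epsilon$ (additive) have to be satisfied simultaneously, and the relevant error bounds are those derived in Theorem \ref{thm:acc_mult} rather than those in Theorem \ref{thm:acc_add}.

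First, I would note that the algorithm performs numerical integration over all ordered pairs of grid squares of side length $\Delta$ in $Rec$, so the number of grid points is $A/\Delta^2$ and the total work is $O(A^2/\Delta^4)$, exactly as in the proof of Theorem \ref{thm:runtime_add}; in particular, the form $O(A^2/\Delta^4)$ is unaffected by whether we interpret the error additively or in combined form. Thus the theorem reduces to choosing the largest $\Delta$ that is compatible with both the multiplicative and the additive parts of the accuracy bound obtained in Theorem \ref{thm:acc_mult}.

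Next, I would invoke Theorem \ref{thm:acc_mult}, which yields (as $\Delta\to 0$) a multiplicative error of order $\dfrac{aA}{2r}\sqrt{\Delta}$ and an additive error of order $aMr\mathcal{D}A\sqrt{\Delta}$, where $a\le \mathrm{const}\cdot \mathcal{D}^2/\sqrt{r}$. Substituting this upper bound for $a$, the multiplicative constraint $\dfrac{aA}{2r}\sqrt{\Delta}\le\varepsilon$ becomes $\sqrt{\Delta}=O\!\left(\dfrac{\varepsilon r^{3/2}}{\mathcal{D}^2 A}\right)$, i.e.\ $\Delta=\Omega\!\left(\dfrac{\varepsilon^2 r^3}{\mathcal{D}^4 A^2}\right)$ is the largest grid constant we can afford from the multiplicative side. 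Likewise the additive constraint $aMr\mathcal{D}A\sqrt{\Delta}\le\epsilon$ becomes $\Delta=\Omega\!\left(\dfrac{\epsilon^2}{M^2\mathcal{D}^6 A^2 r}\right)$ after substituting the bound on $a$. Taking $\Delta$ to be the minimum of these two quantities (so that both accuracy requirements hold simultaneously) and substituting into $O(A^2/\Delta^4)$ gives two summands, one from each constraint: the multiplicative one contributes $O\!\left(\dfrac{A^{10}\mathcal{D}^{16}}{\varepsilon^8 r^{12}}\right)$, and the additive one contributes the second summand claimed in the theorem.

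The main obstacle I anticipate is the bookkeeping of the geometric constants, in particular the factor $a\le \mathrm{const}\cdot \mathcal{D}^2/\sqrt r$ from the tangent-difference estimates in Lemmas \ref{lem:gamma_error}--\ref{lem:gamma_error_gen}; one has to be careful that the constant $a$ from Theorem \ref{thm:acc_mult} is propagated consistently into both the multiplicative and the additive terms, and in particular that the $r$-dependence (which shifts the balance between the two regimes of $\Delta$) comes out correctly in the final exponents. Everything else is routine algebra, using $\min\{x,y\}^{-4}\le x^{-4}+y^{-4}$ to split the final bound into the two summands displayed in the theorem.
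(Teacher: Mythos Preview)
Your proposal is correct and follows essentially the same approach as the paper: bound the running time by $O(A^2/\Delta^4)$, extract from Theorem~\ref{thm:acc_mult} the two constraints $\varepsilon=O\!\bigl(\tfrac{\mathcal{D}^2 A}{r^{3/2}}\sqrt{\Delta}\bigr)$ and $\epsilon=O\!\bigl(M\sqrt{r}\,\mathcal{D}^3 A\sqrt{\Delta}\bigr)$, solve each for $\Delta$, and substitute. Your caveat about the $r$-dependence in the additive branch is well placed---the algebra there is delicate and the paper itself is not entirely internally consistent on that exponent---but the method and structure of your argument match the paper's proof exactly.
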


\begin{proof}
As described in the proof of Theorem~\ref{thm:runtime_add},
the number of grid points in the rectangle is $A/\Delta^2$. 
Thus the running time is at most proportional to
the number of pairs of grid points, which is $O\left(A^2/\Delta^4\right)$.

From the proof of Theorem~\ref{thm:acc_mult}, we obtain that as $\Delta \rightarrow 0$
$$ \varepsilon = O\left( \frac{\mathcal{D}^2A^2}{r^{1.5}} \sqrt{\Delta} \right),$$
$$ \epsilon = O\left( M\sqrt{r}\mathcal{D}^3A\sqrt{\Delta}   \right).$$
Thus, if 
$\Delta = \Omega\left(\dfrac{\varepsilon^2r^3}{\mathcal{D}^4A^2}  \right)$,
the total running time of the algorithm is
$$O\left(A^2/\Delta^4\right) = O\left( \frac{A^{10}\mathcal{D}^{16}}{\varepsilon^8r^{12}} \right).$$
Otherwise, $\Delta = \Omega\left( \dfrac{\epsilon^2r}{M^2\mathcal{D}^6A^2} \right)$
and the total running time of the algorithm is 
$$O\left( \frac{A^{10}\mathcal{D}^{24}M^8}{\epsilon^8r^4} \right).$$
\end{proof}
\section{Find Sensitive Locations Scheme}
\label{chap:FSL}
\subsection{Sensitivity Map for Circular Attacks and Maximum Impact}
In the previous section we showed how to evaluate the damage of a cut $D$ in a specific location. 
Using Algorithm~\ref{alg:EDCC} ($EDCC$) one can approximate the TEC for a circular attack at every point,
and in particular, find an approximated worst case attack (one with the highest TEC value).  

To achieve this goal, we divide $Rec$ into squares, forming a grid. 
Then, we execute $EDCC$ algorithm from the previous section
for every grid point (squares center-points) such that it is a center-point of a circular cut of radius $r$.
This leads to a ``network sensitivity map'', i.e., for every point we have an approximation of the damage by a possible attack in that point.

The approximated worst cut is given by taking the point with the highest TEC value among all the centers of grid squares.
The actual worst case cut can be potentially located at any 
point within the grid squares whose centers' calculated TECs are approximated by the $EDCC$ algorithm.
To guarantee an attack location with TEC of at least $C-\epsilon$ where $C$ is the TEC of the actual worst cut and an accuracy parameter $\epsilon>0$, we provide algorithm $FSL$ (see pseudo-code in Algorithm \ref{alg:FSL}).

\begin{algorithm}[t]
 \caption{FindSenstiveLocations (FSL): Approximation algorithm for the network sensitivity map under a circular attack}
\label{alg:FSL}
\begin{algorithmic}[1]

\STATE For a network $N$, a cut of radius $r$, and accuracy parameter $\epsilon>0$
, apply the function $computeGrid(Rec,r,\epsilon/2)$ to find $\Delta>0$ such that 
the accuracy of Algorithm~\ref{alg:EDCC}, given by Theorem \ref{thm:acc_add} is $\epsilon/2$.
\STATE Form a grid of constant $\Delta$ (found in step 1) from $Rec$. For every grid point $p$,
apply procedure $EDCC(N,cut(p,r),\epsilon/2)$. The grid point with the highest calculated TEC is the center of the approximated worst cut.

\end{algorithmic}
\end{algorithm}

We now prove the correctness of Algorithm \ref{alg:FSL} ($FSL$). 
\begin{theorem}
\label{theorem:FSL}
For an accuracy parameter $\epsilon>0$, the attack with the highest TEC value $\tilde C$ found by the above algorithm 
satisfies $\tilde C\geq C-\epsilon$, where $C$ is the TEC value for the actual worst cut.
\end{theorem}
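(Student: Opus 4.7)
The plan is to compare three TEC values by a chain of inequalities: the true TEC $C = C(p^*)$ at the actual worst cut center $p^* \in Rec_r$, the true TEC $C(p_g)$ at the grid point $p_g$ closest to $p^*$, and the EDCC-approximated value $\tilde C(p_g)$ output by Algorithm~\ref{alg:EDCC} at $p_g$ with accuracy parameter $\epsilon/2$. If we can show $|\tilde C(p_g) - C(p_g)| \leq \epsilon/2$ and $|C(p_g) - C(p^*)| \leq \epsilon/2$, then since $\tilde C$ is the maximum over all grid points, $\tilde C \geq \tilde C(p_g) \geq C(p_g) - \epsilon/2 \geq C - \epsilon$, which is exactly the claim.

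First, I would note that because $p_g$ is the grid-square center closest to $p^*$, we have $d(p^*, p_g) \leq \Delta / \sqrt{2} < \Delta$. The bound $|\tilde C(p_g) - C(p_g)| \leq \epsilon/2$ is then immediate from Theorem~\ref{thm:acc_add}, since Step~1 of the algorithm explicitly invokes $computeGrid(Rec, r, \epsilon/2)$ to select $\Delta$ small enough to guarantee that level of additive accuracy.

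The substantive step is bounding $|C(p_g) - C(p^*)| \leq \epsilon/2$, i.e., the sensitivity of the actual TEC to a shift of the cut center by at most $\Delta/\sqrt{2}$. I would argue this with essentially the same geometric ingredients used in the proof of Theorem~\ref{thm:acc_add}, recycled with the roles interchanged: there the cut was fixed and the integration point moved by $\Delta$; here the integration is exact and the cut's center moves by $\Delta/\sqrt{2}$. Concretely, writing $D^* = cut(p^*, r)$ and $D_g = cut(p_g, r)$, the $\alpha$- and $\beta$-link contributions differ only on the symmetric difference $D^* \triangle D_g$, which is contained in an annular region of radii $[r-\Delta/\sqrt{2}, r+\Delta/\sqrt{2}]$ centered somewhere between $p^*$ and $p_g$ and hence has area $O(r\Delta)$; integrating $f(u)f(v)g(u,v)$ over such regions gives an $O(r\Delta T |Rec|)$ contribution. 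For the $\gamma$-link term, the shadow region $K_u$ for a fixed outside point $u$ changes because the two tangents from $u$ to $D^*$ differ from those to $D_g$; applying Lemma~\ref{lem:gamma_error_gen} with a tangent perturbation induced by moving the disk center by $\Delta/\sqrt{2}$ (rather than moving the source point by $\Delta$) yields an area bound of $O(\sqrt{\Delta})$ per $u$, and integrating over $u \in Rec$ gives a $\gamma$-term error of order $\mathcal{D}^2 |Rec| T \sqrt{\Delta/r}$, matching the leading order in Theorem~\ref{thm:acc_add}. Consequently the same choice of $\Delta$ that guarantees EDCC accuracy $\epsilon/2$ also guarantees $|C(p_g) - C(p^*)| \leq \epsilon/2$ (absorbing the constant factor into $computeGrid$, or if one prefers, passing an accuracy of $\epsilon/4$ to $computeGrid$).

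The main obstacle is precisely this sensitivity bound for $\gamma$-links: one must verify that a shift of the disk itself perturbs the tangent directions from a fixed external point $u$ in a way controlled by the same $\sqrt{\Delta}$ estimate as in Lemma~\ref{lem:gamma_error_gen}, which handled the reverse geometry. This follows from the symmetry of the tangent construction and a direct geometric computation analogous to the one bounding $\sin(\angle uwv)$ in that lemma, but it is the one place in the argument where a bare citation of earlier results does not quite suffice and a short adaptation is needed. One small caveat to handle is that if $p^*$ lies near the boundary of $Rec_r$, the closest grid point may fall outside $Rec_r$; this is resolved by either restricting attention to grid points in $Rec_r$ or by noting that any boundary grid point can be replaced by its projection into $Rec_r$ at a cost absorbed into the same $O(\sqrt{\Delta})$ estimate.
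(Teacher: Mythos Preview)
Your proposal is correct and follows the same two-step chain $\tilde C \geq \tilde C(p_g) \geq C(p_g)-\epsilon/2 \geq C-\epsilon$ that the paper uses. The only substantive difference is in how the sensitivity bound $|C(p_g)-C(p^*)|\leq \epsilon/2$ is established. You re-derive it geometrically, adapting Lemma~\ref{lem:gamma_error_gen} to the case where the disk center (rather than the source point $u$) is perturbed by $\Delta/\sqrt{2}$, and you correctly flag this adaptation as the one place needing extra work. The paper instead dispatches this step by a symmetry observation: shifting the cut center by $d$ while keeping $f,g$ fixed is the same as keeping the cut fixed and shifting $f,g$ (and the grid) by $-d$, which is exactly the situation already analyzed in the proof of Theorem~\ref{thm:acc_add}. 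This change-of-coordinates argument avoids redoing any geometry and immediately yields the $\epsilon/2$ bound from the same $\Delta$. Your direct route is perfectly valid and perhaps more transparent about where each error term comes from; the paper's route is shorter and sidesteps the tangent-perturbation computation you identified as the main obstacle.
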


\begin{proof}
By Theorem \ref{thm:acc_add} for every $\epsilon'>0$ one can find a grid constant $\Delta>0$ such
that for any point $p\in Rec$ the TEC value of $cut(p,r)$ obtained by algorithm $EDCC$ is within $\epsilon'$-accuracy (additive)
from the actual TEC value of $cut(p,r)$.

For any cut located at a grid point, take a cut located at some other point within the grid square,
so it is within a distance $d<\Delta$ from the center of the square. 
The difference between the TEC for these two cases is exactly the same as in the symmetric case, where 
the functions $f$, $g$ and the grid, are shifted a distance $d$ in the other direction.
Thus, using similar arguments as in the proof of Theorem \ref{thm:acc_add}, we obtain that the difference 
is at most $\epsilon'$.
Taking $\epsilon'=\epsilon/2$ completes the proof.
\end{proof}


The following theorem determines the running time of Algorithm~\ref{theorem:FSL_runtime} for an additive accuracy parameter $\epsilon>0$.

\begin{theorem}
\label{theorem:FSL_runtime}
For a $Rec$ of area $A$ with diagonal length $\mathcal{D}$, attack of radius $r$, and an additive accuracy parameter $\epsilon>0$, 
the total running time of Algorithm~\ref{alg:FSL} ($FSL$) is
$O\left(\dfrac{A^{15}\mathcal{D}^{24}T^{12}}{\epsilon^{12}r^6} + \dfrac{A^{15}M^6}{\epsilon^6} \right)$,
where $T=\max_{u,v\in Rec}{\{f(u)f(v)g(u,v)\}}$, and $M$ is the supremum on the variation rate of $f(\cdot)f(\cdot)g(\cdot,\cdot)$ over $Rec$.
\end{theorem}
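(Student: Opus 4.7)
The plan is to combine the per-call running time bound of Algorithm~\ref{alg:EDCC} from Theorem~\ref{thm:runtime_add} with the number of grid points at which $EDCC$ is invoked. Concretely, $FSL$ first selects $\Delta$ via $ComputeGrid(Rec,r,\epsilon/2)$ to guarantee that each $EDCC$ call is $\epsilon/2$-accurate, then calls $EDCC(N,{\rm{cut}}(p,r),\epsilon/2)$ on every grid point $p\in Grid$. Since the grid has $A/\Delta^2$ center points, the total running time is at most $(A/\Delta^2)$ times the per-call cost, so the argument reduces to quantifying $\Delta$ in terms of $\epsilon$ (from the accuracy analysis in the proof of Theorem~\ref{thm:acc_add}) and then plugging into the per-call bound.

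First I would recall from the proof of Theorem~\ref{thm:runtime_add} the two regimes for $\Delta$ when the target accuracy is $\epsilon/2$: when $M=O(\mathcal{D}^4 T^2/(\epsilon r))$, the dominant error term is $\mathcal{D}^2 AT\sqrt{\Delta/r}$ and one gets $\Delta=\Omega(\epsilon^2 r/(\mathcal{D}^4 A^2 T^2))$; otherwise the term $\Delta M A^2$ dominates and one gets $\Delta=\Omega(\epsilon/(MA^2))$. Since the per-call running time in Theorem~\ref{thm:runtime_add} is $O(A^{10}\mathcal{D}^{16}T^8/(\epsilon^8 r^4) + A^{10}M^4/\epsilon^4)$ (with the two summands corresponding to exactly the same two regimes), multiplying by the number of grid points $A/\Delta^2$ in the first regime gives
\[
\frac{A}{\Delta^2}\cdot \frac{A^{10}\mathcal{D}^{16}T^8}{\epsilon^8 r^4}
\;=\; O\!\left(\frac{A^5 \mathcal{D}^8 T^4}{\epsilon^4 r^2}\right)\cdot O\!\left(\frac{A^{10}\mathcal{D}^{16}T^8}{\epsilon^8 r^4}\right)
\;=\; O\!\left(\frac{A^{15}\mathcal{D}^{24}T^{12}}{\epsilon^{12} r^6}\right),
\]
and in the second regime
\[
\frac{A}{\Delta^2}\cdot \frac{A^{10}M^4}{\epsilon^4}
\;=\; O\!\left(\frac{A^5 M^2}{\epsilon^2}\right)\cdot O\!\left(\frac{A^{10}M^4}{\epsilon^4}\right)
\;=\; O\!\left(\frac{A^{15}M^6}{\epsilon^6}\right),
\]
which together yield the claimed bound.

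The only mildly subtle point is checking that no additional overhead is incurred by $FSL$ beyond the loop over grid points: $ComputeGrid$ runs in $O(1)$ (it only needs to pick a sufficiently small $\Delta$ from the closed-form expression supplied by Theorem~\ref{thm:acc_add}), and taking the maximum among the computed TEC values is linear in the number of grid points and therefore absorbed by a single $EDCC$ call. Replacing $\epsilon$ by $\epsilon/2$ changes only the hidden constants, so the two bounds above combine to give the stated running time. I expect the main bookkeeping obstacle to be making sure the two regime cases for $\Delta$ are consistently matched with the corresponding dominant summand in the $EDCC$ running time, so that no cross-term artificially inflates the final expression; this is just a matter of observing that the same inequality on $M$ that selects the $\Delta$-regime also selects the dominant summand in the per-call cost.
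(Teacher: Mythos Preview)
Your proposal is correct and follows essentially the same approach as the paper: bound the total cost by (number of grid points) times (per-call cost of $EDCC$), split into the two regimes for $\Delta$ coming from the accuracy analysis, and substitute. The only cosmetic difference is that the paper writes the total directly as $O(A^3/\Delta^6)$ and then plugs in the two bounds on $\Delta$, whereas you first express $A/\Delta^2$ and the per-call cost separately in terms of $\epsilon$ before multiplying; the arithmetic and the resulting bounds are identical.
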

\begin{proof}
The algorithm first determines a grid constant $\Delta$ such that the accuracy of Algorithm~\ref{alg:EDCC}, given by Theorem \ref{thm:acc_add} is $\epsilon/2$. Then samples a circular cut of radius $r$ at the center point $p$ of each grid square. For each such cut, the algorithm executes $EDCC(N,cut(p,r),\epsilon/2)$ in $O(A^2/\Delta^4)$ time. The grid has $O\left(A/\Delta^{2}\right)$ points.
Thus the total running time is at most $O\left(A^3/\Delta^6\right)$.
 
Now, from the proof of Theorem \ref{thm:runtime_add} we obtain that
when $M=O\left(\dfrac{\mathcal{D}^4T^2}{\epsilon r}\right)$, 
$$\Delta = \Omega\left(\frac{\epsilon^2 r}{\mathcal{D}^4A^2T^2}\right),$$
and thus, the total running time of the algorithm is
$$O(A^3/\Delta^6)=O\left(\frac{A^{15}\mathcal{D}^{24}T^{12}}{\epsilon^{12}r^6}\right).$$
Otherwise, if $M=\omega\left(\dfrac{\mathcal{D}^4T^2}{\epsilon r}\right)$, 
we obtain that $\Delta=\Omega\left(\dfrac{\epsilon}{MA^2}\right)$,
and the running time of the algorithm is $O\left(\dfrac{A^{15}M^6}{\epsilon^6}\right)$.
\end{proof}

Algorithm~\ref{alg:FSL} can be modified to give a multiplicative approximation which is independent of the maxima of $f$ and $g$ (over $Rec$),
by using the modified function $computeGrid(Rec,r,\epsilon/2, \varepsilon/2)$ and the modified $EDCC$ algorithm 
(as described in section~\ref{subsec:multiplicative}).
The correctness of this algorithm is obtained similarly as in Theorem~\ref{theorem:FSL}.
For the running time we provide the following theorem.

\begin{theorem}
\label{theorem:FSL_runtime_mult}
For a $Rec$ of area $A$ with diagonal length $\mathcal{D}$, attack of radius $r$,
a multiplicative accuracy parameter $\varepsilon>0$ and an additive accuracy parameter $\epsilon>0$, 
the total running time of the modified $FSL$ algorithm is
$O\left( \dfrac{A^{15}\mathcal{D}^{36}M^{12}}{\epsilon^{12}r^6} + \dfrac{A^{15}\mathcal{D}^{24}}{\varepsilon^{12}r^{18}} \right)$,
where $M$ is a bound on the variation rate of $f(\cdot)$, $g(\cdot)$, and the product $f(\cdot)f(\cdot)g(\cdot,\cdot)$ over $Rec$. 
\end{theorem}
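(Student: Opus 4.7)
The plan is to mirror the proof of Theorem~\ref{theorem:FSL_runtime}, but substitute the grid-constant lower bounds coming from the combined multiplicative/additive analysis of Theorem~\ref{thm:acc_mult} (as used in Theorem~\ref{thm:runtime_mult}) in place of the purely-additive bounds used in Theorem~\ref{thm:runtime_add}. This is essentially an arithmetic chase, so the difficulty is only in being careful with exponents and in splitting into the correct two cases.

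First, I would observe that the outer structure of the running time is identical to the proof of Theorem~\ref{theorem:FSL_runtime}: the grid has $O(A/\Delta^{2})$ squares, and for each grid point we invoke the modified $EDCC$ which (as in Theorem~\ref{thm:runtime_mult}) takes $O(A^{2}/\Delta^{4})$ time, for a total bound of $O(A^{3}/\Delta^{6})$. So everything reduces to determining how small we must take $\Delta$ to guarantee a combined $(\varepsilon/2,\epsilon/2)$-approximation at every grid center.

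Next, from the proof of Theorem~\ref{thm:acc_mult}, the total error as $\Delta\to 0$ has a leading multiplicative factor of order $\frac{\mathcal{D}^{2}A}{r^{3/2}}\sqrt{\Delta}$ and a leading additive term of order $M\sqrt{r}\,\mathcal{D}^{3}A\sqrt{\Delta}$. Setting these equal to $\varepsilon/2$ and $\epsilon/2$ respectively and inverting (exactly as in Theorem~\ref{thm:runtime_mult}) yields the two lower bounds
\[
\Delta \;=\; \Omega\!\left(\frac{\varepsilon^{2}r^{3}}{\mathcal{D}^{4}A^{2}}\right)
\qquad\text{and}\qquad
\Delta \;=\; \Omega\!\left(\frac{\epsilon^{2}r}{M^{2}\mathcal{D}^{6}A^{2}}\right),
\]
and we must take the smaller of the two (i.e., the binding constraint) to meet both accuracy requirements simultaneously. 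The result follows by considering the two regimes separately and summing the resulting running-time bounds.

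Finally, plug each $\Delta$ into $O(A^{3}/\Delta^{6})$. The multiplicative-binding regime gives $O(A^{3}\cdot \mathcal{D}^{24}A^{12}/(\varepsilon^{12}r^{18}))=O(A^{15}\mathcal{D}^{24}/(\varepsilon^{12}r^{18}))$, while the additive-binding regime gives $O(A^{3}\cdot M^{12}\mathcal{D}^{36}A^{12}/(\epsilon^{12}r^{6}))=O(A^{15}\mathcal{D}^{36}M^{12}/(\epsilon^{12}r^{6}))$, and adding these yields the stated bound. The only potential pitfall is the bookkeeping of the constant $a\leq\mathrm{const}\cdot\mathcal{D}^{2}/\sqrt{r}$ from Section~\ref{subsec:pre}, which enters both the multiplicative and additive constants of Theorem~\ref{thm:acc_mult}; as long as those substitutions are made consistently (exactly as in Theorem~\ref{thm:runtime_mult}), the algebra goes through cleanly, so I do not foresee a conceptual obstacle beyond careful exponent tracking.
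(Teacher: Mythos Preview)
Your proposal is correct and follows essentially the same approach as the paper: reduce to the $O(A^{3}/\Delta^{6})$ bound from Theorem~\ref{theorem:FSL_runtime}, import the two $\Delta$ lower bounds from the proof of Theorem~\ref{thm:runtime_mult}, and substitute each in turn. The paper's own proof is terser (it simply cites those two bounds and states the result), whereas you spell out the arithmetic, but there is no substantive difference in method.
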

\begin{proof}
As described in Theorem~\ref{theorem:FSL_runtime}, the total running time is at most $O\left(A^3/\Delta^6\right)$.
Now, for the modified $EDCC$ algorithm described in section~\ref{subsec:multiplicative}, 
we obtain by the proof of Theorem~\ref{thm:runtime_mult} that the grid constant $\Delta$ satisfies
$\Delta = \Omega\left(\dfrac{\varepsilon^2r^3}{\mathcal{D}^4A^2}  \right)$ or (the asymptotic minimum)
$\Delta = \Omega\left( \dfrac{\epsilon^2r}{M^2\mathcal{D}^6A^2} \right)$

Thus, the total running time of the modified $FSL$ algorithm is
$$O\left( \frac{A^{15}\mathcal{D}^{36}M^{12}}{\epsilon^{12}r^6} + \frac{A^{15}\mathcal{D}^{24}}{\varepsilon^{12}r^{18}} \right).$$

\end{proof}

\subsection{Random Attacks}
\label{sec:random}
The impact of a \emph{random circular cut} to \emph{deterministic} network was recently studied (see section~\ref{sec:related}).
The random location of a disaster can model failure resulting from a natural 
disaster such as a hurricane or collateral (non-targeted) damage in an EMP attack.
An interesting question is to evaluate the expected impact of a random circular cut to our \emph{stochastic} network model.

Algorithm~\ref{alg:EDCC} ($EDCC$) gives an approximation for the expected damage caused by a circular cut which is located at a specific point.
We can use it to develop an algorithm for evaluating the damage caused by a random circular cut, using a similar concept
as in Algorithm~\ref{alg:FSL} ($FSL$). 

For a random circular cut distributed \emph{uniformly} over $Rec$,
an additive approximation of the expected damage caused by such a random cut to our stochastic network model is given by

\begin{equation}
\label{eq:random}
\frac{1}{|Rec|}\iint\limits_{Rec} EDCC(N,cut(p,r),\epsilon')\, dp. 
\end{equation}
For a given additive accuracy parameter $\epsilon > 0$,
the idea is to evaluate the above equation numerically over a grid, with a grid constant $\Delta$ small enough, and an appropriate $\epsilon'$,
such that the total additive error will be at most $\epsilon$. This is provided in Algorithm $RCCE$ (see pseudo-code in Algorithm~\ref{alg:RCCE}). 

\begin{algorithm}[t]
 \caption{RandomCircularCutEvaluation (RCCE): Approximation algorithm for evaluating the expected damage caused by a random circular attack.}
\label{alg:RCCE}
\begin{algorithmic}[1]

\STATE For a random network $N$, a random cut of radius $r$, distributed uniformly over $Rec$,
and an additive accuracy parameter $\epsilon>0$, 
apply the function $computeGrid(Rec,r,\epsilon/2)$ to find $\Delta>0$ such that 
the accuracy of Algorithm~\ref{alg:EDCC}, given by Theorem \ref{thm:acc_add} is $\epsilon/2$.
\STATE Form a grid of constant $\Delta$ (found in step 1) from $Rec$. Denote by $Grid$ the set\\ of grid points (square centers).
\RETURN  $\frac{1}{|Rec|}\sum_{p\in Grid} EDCC(N,cut(p,r),\epsilon/2)$

\end{algorithmic}
\end{algorithm}

Similarly as in Algorithm~\ref{alg:FSL} ($FSL$), by the proof of Theorem~\ref{theorem:FSL}, we obtain that
the grid constant $\Delta$, chosen in Algorithm~\ref{alg:RCCE} guarantees that 
for two cuts of radius $r$, located at points $u$ and $v$ within euclidean distance at most $\Delta$ from each other, 
the difference in the TEC value for these two cuts is at most $\epsilon$.
Thus, when evaluating (\ref{eq:random}) numerically in Algorithm~\ref{alg:RCCE},
the total error is at most $\dfrac{1}{|Rec|} \cdot |Rec| \cdot \epsilon = \epsilon$.
Thus, the correctness of Algorithm~\ref{alg:RCCE} is obtained similarly as for the $FSL$ algorithm, given by Theorem~\ref{theorem:FSL}.
The running time of Algorithm~\ref{alg:RCCE} is also the same as for the $FSL$ algorithm, given by Theorem~\ref{theorem:FSL_runtime}.

Using the multiplicative approximation with the modified $EDCC$ algorithm and the modified function $computeGrid(Rec,r,\epsilon/2, \varepsilon/2)$ 
(as described in section~\ref{subsec:multiplicative}),
a similar result with running time as in Theorem~\ref{theorem:FSL_runtime_mult} can be obtained for a combined multiplicative and additive approximation of the worst-case cut, independent of the maxima of $f$ and $g$ over $Rec$.

A random circular cut can be modeled more generally when its location is distributed
with some distribution function with density $\psi$ over $Rec$.
Similarly to the uniform case,
an approximation of the expected damage caused by such a random cut to our stochastic network model is given by
\begin{equation}
\label{eq:random_gen}
\iint\limits_{Rec} \psi(p)EDCC(N,cut(p,r),\epsilon')\, dp.
\end{equation}
For a random circular cut distributed with density function $\psi(p)$ over $Rec$,
such that $\psi$ is a function of \emph{bounded variation} over $Rec$,
A \emph{multiplicative} approximation algorithm can be obtained by applying
similar techniques as in Algorithm~\ref{alg:RCCE} and Algorithm~\ref{alg:FSL}, choosing the grid constant $\Delta$
small enough, such that multiplicative error factor will be at most $\epsilon$.
The approximation and running time in this case depends also on the \emph{supremum} on the variation rate of $\psi$ over $Rec$
(which was zero in the uniform case),
due its contribution to the (multiplicative) accumulated error   
when evaluating numerically the integral (\ref{eq:random_gen}) over the grid.

\subsection{Simulations and Numerical Results}
\label{sec:numerical_results}

\begin{figure}[t]
\centering
	  \includegraphics[width=0.9\textwidth]{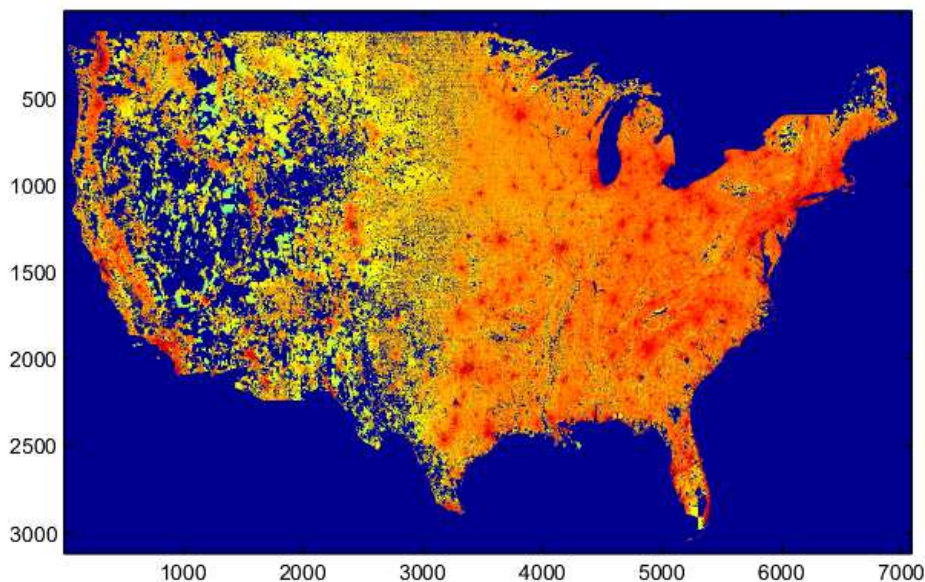}
 \caption{ Color map of the the USA population density in logarithmic scale. Data is taken from \cite{columbia}. }
       \label{fig:pop_density}
\end{figure}

In order to test our algorithms, we estimate the expected impact of 
circular cuts on communication networks in the USA based on a population density map.
We implemented the algorithm 
as a C program. Data for the population density of the USA taken from 
\cite{columbia} was taken as the intensity function $f(u)$. The supplied data 
is the geographic density of the USA population at a resolution of $30"$, which is approximately
equivalent to 0.9km. The matrix given was of  dimensions $3120\times7080$. See Fig. \ref{fig:pop_density}.

In order to achieve faster running times, the data was averaged over 
$30\times30$ blocks, to give a resolution of approximately 27km. This gave a 
matrix of  dimensions $104\times 236$. The algorithm was then run over 
this intensity function. The function $y(u,v)$ was taken to
be $1/\mathrm{dist}(u,v)$ , where $\mathrm{dist}$ is the Euclidean distance
between the points, based on observations that the lengths of
physical Internet connections follow this distribution \cite{ref:BA99}. 
The capacity probability function $h$ was taken to be constant, independent of the distance, reflecting
an assumption of standard equipment. Each run took around 24 hours on a
standard Intel CPU computer.

Results for different cut radii are given in Fig. \ref{fig:rad5}, Fig. \ref{fig:rad8} and Fig. \ref{fig:rad10}.
As can be seen , the most harmful cut would be around NYC, as expected, where 
for the larger cut radius, a cut between the east and west coast, effectively disconnecting 
California from the north-east, would also be a worst case scenario. Lower, but still 
apparent maxima are observed in the Los-Angeles, Seattle and Chicago areas.

Comparing the results to the results obtained for the fiber  backbone in \cite{NZCM.INFOCOM09} (see Fig.~\ref{fig:fiber})
it can be seen that some similarities and some dissimilarities exist.  While the NYC maxima
is apparent in all measures,  California seems to be missing from the maxima in 
\cite{NZCM.INFOCOM09}, probably reflecting the low density of fibers in that area in the map 
studied  in~\cite{NZCM.INFOCOM09}.
\begin{figure}[t]
\centering
 \includegraphics[width=0.9\textwidth]{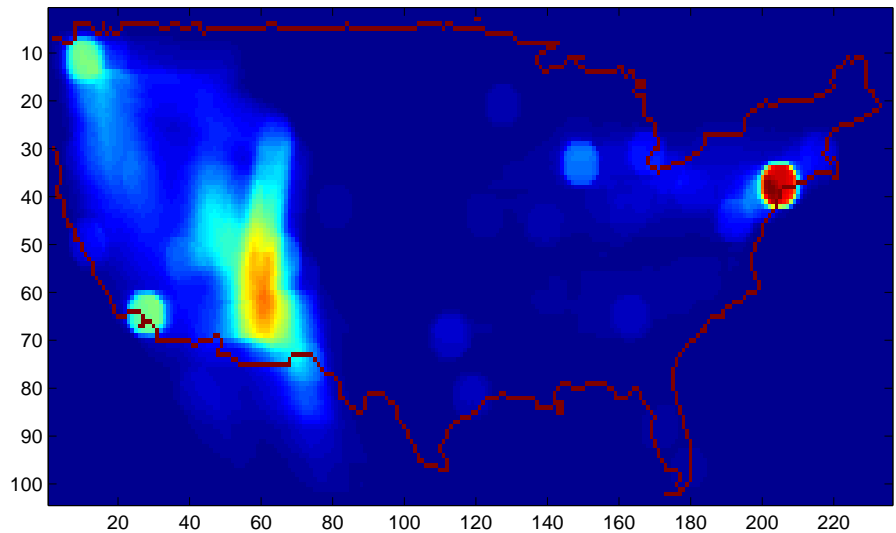}
 \caption{Color map of the centers of circular cuts with radius $r=5$  (approximately 130km). Red is most harmful.}
       \label{fig:rad5}
\end{figure}

\begin{figure}[H]
\centering
        \includegraphics[width=0.9\textwidth]{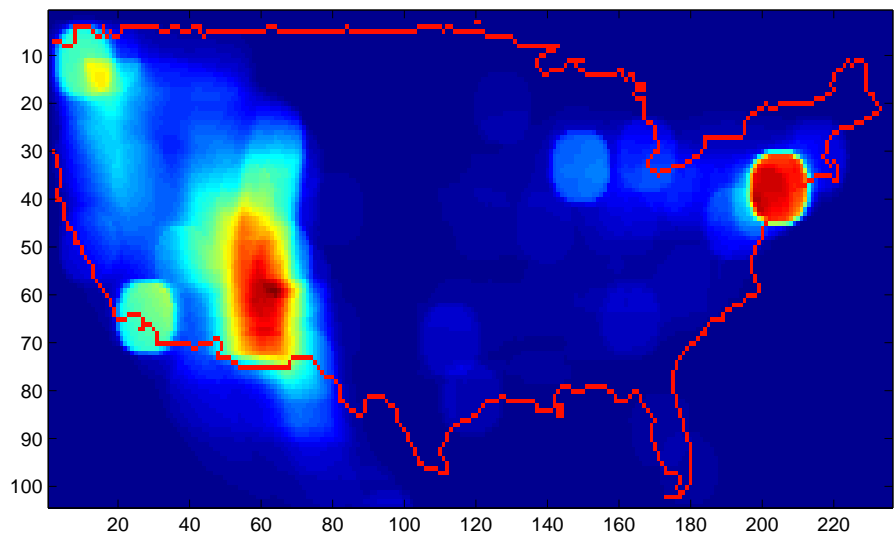}
 \caption{Color map of the centers of circular cuts with radius $r=8$ (approximately 208km). Red is most harmful.}
       \label{fig:rad8}
\end{figure}
As many fibers may exist that are not represented
on the map in \cite{NZCM.INFOCOM09}, it may be reasonable to assume that a 
cut around California would have a more significant effect than reflected there.
On the other hand, several worst case cuts in Texas, and especially in Florida are apparent in 
 \cite{NZCM.INFOCOM09} and are missing in the current simulation results. It seems
that the effect of the narrow land bridge in Florida makes cuts there especially 
harmful, whereas our simulation assumes links are straight lines, which will make
links to both east and west coast pass through the ocean, thus making cuts 
less harmful.

As a full map of communication lines is not available, it is still unclear how good of an
approximation the results here supply. However, the tool can be used in conjunction
with more complicated modeling assumptions, including topographic features and
economic considerations to give more exact results.

\begin{figure}[t]
\centering
        \includegraphics[width=0.9\textwidth]{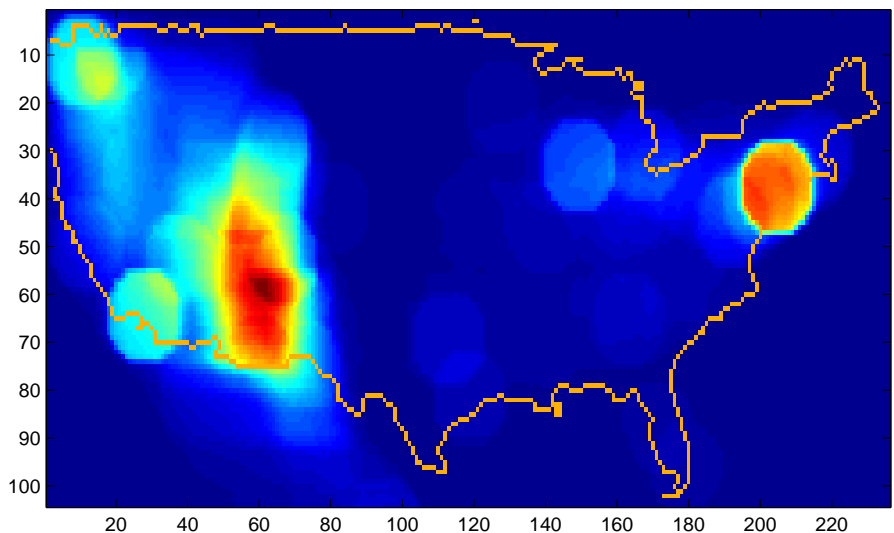}
 \caption{Color map of the centers of circular cuts with radius $r=10$ (approximately 260km). Red is most harmful.}
       \label{fig:rad10}
\end{figure}

\section{Conclusions and Future Work}

\paragraph{Conclusions}
Motivated by applications in the area of network robustness
and survivability, we focused on the problem
of geographically correlated network failures. 
Namely, on assessing the impact of such failures to the network.
While previous works in this emerging field focused mainly on deterministic networks (as described in section~\ref{sec:related}),
we studied the problem under non-deterministic network models.

We proposed a method to stochastically model the geographical (physical) layout of communication networks,
i.e. the geographical locations of nodes and links, as well as the capacity of the links.
This applies to both the case where the network is derived from a random
model, or to the case where the physical topology of the network is unknown (or partially known) to the attacker,
who possesses some statistical information (e.g., population density, topography, economy) 
about the geographical area of the network,
or the probability of having link between various locations.

Using tools from geometric probability and numerical analysis, we developed approximation algorithms
for finding the damage caused by physical disasters (modeled by circular cuts)
at different points and to approximate the location
and damage of the worst-case cuts for this model. 
We also provided an algorithm to assess
the impact of a random circular disaster (i.e. non-targeted) to our random network model, 
motivated by modeling a failure resulting from a natural 
disaster such as a hurricane or collateral (non-targeted) damage in an EMP attack.
We proved the correctness of our schemes and 
mentioned the trade between running time and accuracy, for both additive and multiplicative error terms.

%
In order to test the applicability of our model and algorithms to real-world scenarios,
we applied our algorithms to approximate the damage caused by cuts in different 
locations to communications networks in the USA, where the network's geographical layout
was modeled probabilistically, relying on demographic information (i.e. population density) only.
We found a strong correlation between locations of cuts that cause high relative damage to the population density distribution
over the network's region.
Some of the results agree with the exact results obtained before 
about the fiber backbone of a major network provider in the USA and some do not (as described in section~\ref{sec:numerical_results}).
It is yet to be determined if taking 
into account more links would lead to results closer to our scheme's prediction
or whether the results are fundamentally different due to an inexact link model.

Our results imply that some information on the network sensitivity and 
vulnerabilities can be deduced from the population alone, with no information
on any physical links and nodes.
Thus, our schemes allows
to examine how valuable is public information
(such as demographic, topographic and economic information)
to an attacker's destruction assessment capabilities, and examine
the affect of hiding the actual physical location of the fibers on the attack strategy.
Thereby, the schemes can be used as a tool for policy makers and engineers to
design more robust networks by placing links along paths that avoid 
areas of high damage cuts, or identifying locations which require
additional protection efforts (e.g., equipment shielding).

\paragraph{Future Work} 
The discussion about finding vulnerable geographic locations to physical attacks naturally leads
to the question of robust network design in the face of geographical failures.
Several questions are proposed, one is to investigate the effect of
adding minimal infrastructure (e.g., lighting-up dark fibers) on network resilience, 
and determine how to use low-cost shielding for existing
components to mitigate large-scale physical attacks.
Another question is on designing the network's physical topology 
under some demand constraints (e.g., nodes that should be located within a specific region, capacity and flow demands) 
such that the damage by a large-scale physical attack is minimized.

Another related research direction is to
develop a framework for attack and defense strategies for opponents having no knowledge of each other's strategy.
Using a game-theoretic approach, study a two player zero sum game where one player (the defender)
attempts to design a network as resilient to physical attacks as possible under some demand constraints,
while the other player (the attacker) picks a location for the cut,
without having complete knowledge about the actual network's physical structure.




     
\bibliographystyle{IEEEtranS}
\bibliography{icc-paper}

\end{document}